\documentclass[11pt,a4paper,reqno,twoside]{article}

\usepackage{amssymb,comment}
\usepackage{amsmath}
\usepackage{float}
\usepackage{latexsym}
\usepackage{amsthm}
\usepackage{empheq}
\usepackage{bm}
\usepackage{booktabs}
\usepackage{subcaption}
\usepackage{tikz,pgfplots,cite}
\usepackage{balance}

\usepackage{cite}
\usepackage{amsmath,amssymb,amsfonts}
\usepackage{algorithmic}
\usepackage{graphicx}

\usepackage{latexsym}
\usepackage{thm-restate}
\usepackage{empheq}
\usepackage{bm}
\usepackage{array}
\usepackage{hyperref}
\usepackage[margin=2.8cm]{geometry}
\usepackage{authblk}

\theoremstyle{definition}
\newtheorem{theorem}{Theorem}[section]
\newtheorem{corollary}[theorem]{Corollary}
\newtheorem{lemma}[theorem]{Lemma}
\newtheorem{remark}[theorem]{Remark}
\newtheorem{definition}[theorem]{Definition}

\newtheorem{proposition}[theorem]{Proposition}
\newtheorem{notation}[theorem]{Notation}
\newtheorem{example}[theorem]{Example}
\newtheorem{problem}{Problem}
\newtheorem{conjecture}[theorem]{Conjecture}

\newcommand{\F}{\mathbb{F}}
\newcommand{\N}{\mathbb{N}}

\newcommand{\E}{\mathbb{E}}
\newcommand{\mC}{\mathcal{C}}

\newcommand{\qbinom}[2]{\genfrac{[}{]}{0pt}{}{#1}{#2}}
\title{The DNA Coverage Depth Problem: Duality, \\ Weight Distributions, and Applications}

\author[1]{Matteo Bertuzzo}
\author[1]{Alberto Ravagnani}
\author[2]{Eitan Yaakobi}

\affil[1]{Eindhoven University of Technology, Eindhoven, The Netherlands}
\affil[2]{Technion -- Israel Institute of Technology, Haifa, Israel}

\begin{document}
\maketitle

\renewcommand{\thefootnote}{\fnsymbol{footnote}}
\footnotetext{The research of M.B. is partially supported by the EuroTech Program. The research of A.R. is supported by the Dutch Research Council through grants VI.Vidi.203.045, OCENW.KLEIN.539, and by the European Commission. The research of E.Y. is funded by the European Union (ERC, DNAStorage, 101045114). 

Views and opinions expressed are those of the authors only and do not necessarily reflect those of the European Union or the European Research Council Executive Agency. Neither the European Union nor the granting authority can be held responsible for them. 

Some of the results in this paper have been presented at ISIT 2025.

\textit{Email addresses:}
\href{mailto:m.bertuzzo@tue.nl}{m.bertuzzo@tue.nl},
\href{mailto:a.ravagnani@tue.nl}{a.ravagnani@tue.nl},
\href{mailto:yaakobi@cs.technion.ac.il}{yaakobi@cs.technion.ac.il}.
}
\renewcommand{\thefootnote}{\arabic{footnote}}

\begin{abstract}
The coverage depth problem in DNA data storage is about computing the expected number of reads needed to recover all encoded strands. Given a  generator matrix of a linear code, this quantity equals the expected number of randomly drawn columns required to obtain full rank. While MDS codes are optimal when they exist, i.e., over large fields, practical scenarios may rely on structured code families defined over small fields.
In this work, we develop combinatorial tools to 
solve the DNA coverage depth problem for various linear codes, based on duality arguments and the notion of extended weight enumerator. Using these methods, we derive closed formulas for the simplex, Hamming, ternary Golay, extended ternary Golay, and first‑order Reed–Muller codes. The centerpiece of this paper is a general expression for the coverage depth of a linear code in terms of the weight distributions of its higher-field extensions.  
\end{abstract}

\medskip

\section{Introduction}

The volume of digital data produced worldwide continues to grow at a rapid pace, creating an ever-increasing demand for storage that already exceeds the current supply, with the gap continuing to widen~\cite{b1}. This trend highlights the urgent need for innovative storage technologies that offer higher density, improved efficiency, and long-term durability beyond the limits of existing solutions. In this context, DNA-based storage systems provide an attractive solution, particularly for long-term data archiving, due to their durability, compactness, and low maintenance costs~\cite{b2},~\cite{b3}. 

Data storage in DNA involves a multi-step pipeline. First, the original data is encoded from a string of bits into sequences based on the DNA alphabet ${A, C, G, T}$. These sequences are subsequently divided into blocks, which are synthesized into artificial DNA molecules, referred to as \textit{strands}.
Due to current technology limitations, synthesized strands are limited to lengths of up to approximately $300$ bases~\cite{synth_length}, and multiple noisy copies of each encoded strand are generated. The resulting DNA molecules are then stored within a container. To retrieve the stored information, the strands are translated back into DNA sequences through the sequencing process, generating multiple unordered copies, called \textit{reads}. 
In the last step, the reads are decoded to reconstruct the user’s original data. A distinctive feature of this process is that sequencing accesses strands \textit{randomly}.

Despite its significant potential~\cite{b4, b5, b6, b7, b8, b9, b10}, DNA-based storage currently faces practical limitations due to its relatively slow throughput and high costs compared to alternative storage techniques, resulting from the efficiency of DNA sequencers~\cite{b2, b11, b12}. These limitations are related to the notion of \textit{coverage depth}~\cite{b13}, defined as the ratio between the number of sequenced reads and the number of designed DNA strands. This quantity plays a key role in determining sequencing costs and system performance. 

In this scenario, a central question is how many reads are needed on average to recover all encoded strands. This is known as the \emph{DNA coverage depth problem} and its coding theory aspects have been recently introduced in~\cite{cover_your_bases}. When the data is encoded by a linear block code, the coverage depth problem can be formulated in algebraic terms: Given a rank $k$ generator matrix $G$ of a linear code $\mC$, the number of reads to recover the encoded strands equals the expected number of randomly drawn columns required to obtain a rank $k$ matrix. 
Recent work has shown that 
if the information strands are encoded using an MDS code, the expectation is $n(H_n - H_{n-k})$, where $H_i$ is the $i$-th harmonic number. In particular, MDS codes achieve the optimal value. However, these codes only exist over large finite fields. 

Recent years have seen growing interest in the coverage depth problem and related models. Beyond the full‑recovery setting of~\cite{cover_your_bases}, several works have addressed the \emph{random access} scenario, in which the goal is to retrieve a single information strand efficiently~\cite{rel1, rel2, a_combinatorial_perspective, rel3, rel4, rel5}. Other extensions include models where groups of strands collectively represent one file~\cite{other_rel_1} and studies of read/write trade‑offs for practical coding architectures~\cite{other_rel_2}. Further generalizations have been developed for combinatorial shortmer‑based systems~\cite{ext1, ext2,ext3} and for storage schemes using composite DNA letters~\cite{ext4}.

Motivated by the results of~\cite{cover_your_bases}, this work investigates the coverage depth problem for linear codes defined over small finite fields. Our main objective is to understand how the structural properties of a code
govern the expected number of reads required for full data recovery,
a research line that was initiated in~\cite{isit_paper}. To this end, we develop a set of combinatorial and algebraic tools centered around information‑set enumeration, duality identities, and the extended weight enumerator.

These tools allow us to derive closed formulas for a variety of classical code families. We show how simplex codes admit a simple expression for the expectation. Using a duality argument, we then obtain explicit formulas for Hamming codes and for both the ternary Golay and extended ternary Golay codes. Our main theoretical contribution is a general expression of the coverage depth in terms of the weight distributions of the first few higher-field extensions of the underlying code. This result reduces the computation of the expectation to weight enumeration. As an application, we use the known extended weight enumerator of first‑order Reed–Muller codes to compute their coverage depth as a closed formula.

The remainder of the paper is organized as follows.
Section~\ref{problem_statement} introduces the coverage depth problem and its algebraic formulation. 
Section~\ref{simplex_section} computes the expectation for simplex codes. 
Section~\ref{duality_section} establishes a duality identity expressing the expectation 
in terms of the information-sets of the dual code. 
This result is then used in Section~\ref{Hamming_section} to derive closed formulas for Hamming and ternary Golay codes. 
Section~\ref{weight_distr_section} presents our main result: A general expression of the coverage depth in terms of the weight distributions of the higher-field extension codes of 
$\mC$. 
Finally, Section~\ref{reedmuller_section} applies this formula to obtain an explicit expression for first‑order Reed–Muller codes.

\section{Problem Statement} \label{problem_statement}
In this paper, $q$ is a prime power and $\mathbb{F}_q$ is the finite field with $q$ elements. We let $k$ and $n$ be positive integers with $2 \leq k \leq n$. Furthermore, for a positive integer $m$, we
denote by $H_m$ the $m$-th harmonic number: 
$$H_m=\sum_{i=1}^m \frac{1}{i}.$$

In a typical DNA-based storage system, data is stored as a length-$k$ vector whose entries are themselves vectors (called \textit{strands}) of length $\ell$ over the alphabet $\Sigma = \{ A,C,G,T \}$.
To allow using coding theory tools, we embed $\Sigma^\ell$ into $\mathbb{F}_q$ and use a $k$-dimensional linear block code $\mathcal{C} \subseteq \mathbb{F}_q^n$ to encode an information vector $(x_1, \dots, x_k) \in \mathbb{F}_q^k$ to an encoded vector $(y_1, \dots, y_n) \in \mathbb{F}_q^n$. Note that we would need $|\Sigma^\ell| = 4^\ell$ to divide $q$ in order to identify $\Sigma^\ell$ with~$\F_q$; however, in this paper we do not consider any restrictions on $q$, allowing it to be any prime power.

If a user wishes to retrieve the stored information, the encoded strands initially undergo an amplification process, followed by sequencing.
This generates various unsorted copies of each strand, which may contain errors compared to the originals. These are called \textit{reads}. To simplify our analysis, in this paper we will assume that no errors are made in any of these steps, hence the final output of the process is a multiset of reads, obtained without any specified order. 

The starting point of this paper is a result about the coverage depth problem for DNA data storage~\cite{cover_your_bases}, when all information strands need to be recovered. Since the $k$ information strands are encoded using a generator matrix~$G \in \mathbb{F}_q^{k \times n}$, there is a one-to-one correspondence between the encoded strands and the columns of~$G$, namely the $i$-th encoded strand corresponds to the $i$-th column of the generator matrix; therefore, recovering the $i$-th information strand is equivalent to recovering the $i$-th standard basis vector, that is, it must be in the span of the already recovered columns of $G$, since we can see these columns as vectors in $\mathbb{F}_q^k$. Motivated by these results, we  define the first problem studied in this paper.

\begin{problem}[\textbf{The DNA coverage depth problem}] \label{first_problem}
    Let $G \in \mathbb{F}_q^{k \times n}$ have rank~$k$. Suppose that the columns of $G$ are drawn uniformly randomly with repetition, meaning that each column can be drawn multiple times. Compute the expected number of columns one needs to draw until all the standard basis vectors are in their $\mathbb{F}_q$-span (or equivalently until the submatrix formed by the drawn columns has rank~$k$). We denote such expectation by $\mathbb{E}[G]$.
\end{problem}

In general, obtaining a closed expression for $\E[G]$ is challenging. 
The difficulty arises from the intricate dependence among successive draws: 
Unlike in the classical coupon collector problem~\cite{ccp1, ccp2, ccp3, ccp4}, here the effect of a newly drawn column depends on the \textit{span} of all the previously drawn ones. In particular, 
drawing a coupon that was not drawn previously is \textit{not} necessarily making progress
towards completing the collection, since the new column may fail to increase the rank.

We start by showing that, in contrast to the random access coverage depth problem~\cite[Problem~1]{a_combinatorial_perspective}, where only a single information strand is to be recovered, the value of $\E[G]$ only depends on the row-space of $G$, i.e., on the code that the matrix~$G$ generates.

\begin{proposition} \label{independence_from_G_proposition}
    Let $G,G' \in \F_q^{k \times n}$ have the same row-space. Then $\E[G]=\E[G']$.
\end{proposition}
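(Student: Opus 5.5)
Since $G$ and $G'$ have the same row-space and both have rank $k$, there is an invertible matrix $A \in \mathrm{GL}_k(\F_q)$ with $G' = AG$. The plan is to show that the random process of drawing columns behaves identically for the two matrices, because the stopping condition depends only on the rank of the drawn submatrix, and this rank is unchanged by left multiplication by $A$.

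First, I fix a sequence of column indices $(i_1, i_2, \dots)$ drawn uniformly at random with repetition from $\{1,\dots,n\}$. This is the same probability space for both matrices, since the draws are indices of columns and the $j$-th column of $G'$ is $A g_j$, where $g_j$ is the $j$-th column of $G$. For every $t$ and every index multiset $S=\{i_1,\dots,i_t\}$, the submatrix $G'_S$ equals $A G_S$. Since $A$ is invertible, $\mathrm{rank}(G'_S)=\mathrm{rank}(G_S)$. It follows that the stopping time $\tau = \min\{t : \mathrm{rank}(G_{\{i_1,\dots,i_t\}}) = k\}$ is the same random variable for $G$ and for $G'$.

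Next I need to reconcile the two equivalent formulations of the stopping condition in Problem~\ref{first_problem}. The condition that all standard basis vectors lie in the span of the drawn columns is equivalent to that span being all of $\F_q^k$, that is, to the drawn submatrix having rank $k$. So it suffices to work with the rank formulation, and the span condition needs no separate treatment. Taking expectations of the common stopping time gives $\E[G]=\E[G']$.

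There is essentially no obstacle. The only point requiring care is to state explicitly that the draws are of positions rather than of column vectors. Then the correspondence between the two processes is a pathwise identity and not merely an equality in distribution, and this holds even when $G$ has repeated or zero columns.
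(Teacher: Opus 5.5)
Your proposal is correct and follows essentially the same route as the paper's proof: write $G'=AG$ with $A$ invertible, and note that left multiplication by $A$ preserves the rank of every column submatrix, so both matrices have the same stopping time. Your version makes explicit the pathwise coupling over drawn positions, which the paper leaves implicit.
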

\begin{proof}
Since $G$ and $G'$ have the same row-space, there exists an invertible matrix $A \in \F_q^{k \times k}$ with $G'=AG$. 
The statement follows from the fact that multiplying by $A$ preserves the linear dependencies among columns.
\end{proof}

Proposition~\ref{independence_from_G_proposition} 
shows that the quantity 
$\E[\mC]$ is well defined for a linear error-correcting code $\mC \subseteq \F_q^n$ as $\E[G]$, where $G$ is \textit{any} generator matrix of $\mC$. Therefore, we will use the symbols $\E[G]$ and $\E[\mC]$ interchangeably. We are now ready 
to state the second problem which we will address in this paper, strictly related to Problem~\ref{first_problem}.

\begin{problem}[\textbf{The optimal coverage depth problem}] \label{second_problem}
    For given values of $n$, $k$ and $q$, compute the value
$$\mathbb{E}_{\textnormal{opt}}[n,k]_q \triangleq \min\{\mathbb{E}[\mC] \, : \,  \mC \subseteq \F_q^n \, \text{ is a code of dimension $k$}\},$$
    and construct
    a code $\mC$ attaining the minimum. 
\end{problem}

Throughout the paper, $\smash{G \in \mathbb{F}_q^{k \times n}}$ denotes a rank $k$ matrix, $\mC \subseteq \mathbb{F}_q^n$ is the (linear, block) code of dimension $k$ generated by $G$, and $\mC^\perp \subseteq\F_q^n$ is the dual code of $\mC$ of dimension $n-k$.

Various results related to Problems~\ref{first_problem} and~\ref{second_problem} were obtained in~\cite{cover_your_bases}. We mention the most relevant one for this paper.

\begin{theorem}[\text{see \cite[Corollary~1]{cover_your_bases}}] \label{thm_MDS_bound}
    For any generator matrix~$G$ of a code $\mC \subseteq \F_q^n$ of dimension $k$ we have 
    \[
    \E[G] \ge \sum_{i=0}^{k-1} \frac{n}{n-i} = n(H_n - H_{n-k}).
    \]
    Furthermore, the lower bound is attained with equality only by any generator matrix of an MDS code. 
\end{theorem}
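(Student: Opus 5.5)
The plan is to write $\E[G]$ as a sum of tail probabilities and bound each term by comparing the rank process of $G$ with the rank process of an MDS code. Let $\tau$ be the number of draws needed to reach rank $k$. Then
\[
\E[G]=\sum_{s\ge 0}\Pr[\tau>s].
\]
A more convenient route is to split the process into phases, as in the coupon collector analysis. For $i=0,\dots,k-1$, let $T_i$ be the number of draws performed while the current span $V$ of drawn columns has dimension exactly $i$. Then $\tau=\sum_i T_i$.

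\textbf{Phase bound.} The key step is to show $\E[T_i]\ge \frac{n}{n-i}$ whenever phase $i$ is actually reached. If the current span $V$ has dimension $i$, each new draw succeeds (increases the rank) exactly when the drawn column lies outside $V$. Let $a(V)$ be the number of columns of $G$ (counted with their positions) lying in $V$. The duration of the phase, conditioned on $V$, is then geometric with success probability $(n-a(V))/n$, so its expectation is $n/(n-a(V))$.

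Since $V$ is spanned by $i$ linearly independent drawn columns, those $i$ distinct column positions all lie in $V$. Hence $a(V)\ge i$ and $n/(n-a(V))\ge n/(n-i)$.

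\textbf{Assembly.} Every phase $i=0,\dots,k-1$ is visited, because the rank increases by exactly one at each successful draw. By the strong Markov property at the entry time of phase $i$, we get $\E[T_i]=\E\bigl[n/(n-a(V_i))\bigr]\ge n/(n-i)$, where $V_i$ is the (random) span on entering phase $i$. Summing over $i$ gives
\[
\E[G]\ge \sum_{i=0}^{k-1}\frac{n}{n-i}=n(H_n-H_{n-k}).
\]

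\textbf{Equality case.} This is the main point requiring care. Equality holds iff $a(V_i)=i$ almost surely for every $i$. That means every $i$-dimensional subspace spanned by some $i$ independent columns contains no further column, for all $i\le k-1$. The reachable $V_i$ are exactly the spans of $i$ independent columns, since any such set can be drawn first with positive probability.

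For an MDS code, any $k$ columns are independent. So a further column lying in the span of $i\le k-1$ columns would yield $i+1\le k$ dependent columns, which is impossible; hence MDS codes attain equality.

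Conversely, suppose $\mC$ is not MDS. Then some set of $k$ columns is dependent, so some column $c$ lies in the span of a minimal independent subset $S$ of other columns with $|S|\le k-1$. (If $c$ is zero, or a repeat of another column, take $S$ empty or a single column respectively.) With positive probability the columns of $S$ are drawn first, so phase $|S|$ starts with $a(V)\ge |S|+1$. This gives a strict inequality in that phase, and $\E[G]$ exceeds the bound.

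The only subtle points are handling zero or repeated columns in the equality argument, and justifying the conditional geometric expectation via the Markov property; both are routine.
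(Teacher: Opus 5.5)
Your proof is correct, but it takes a different route from the one the paper relies on. The paper does not reprove this theorem; it quotes it from \cite{cover_your_bases}. Its own framework nevertheless gives the result in a line.

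Proposition~\ref{E_k[C]_general_formula} comes from splitting the process by the number of \emph{distinct} columns drawn so far. The first $s$ distinct columns form a uniformly random $s$-subset, and the wait for the next new column has mean $n/(n-s)$. This yields $\E[\mC]=nH_n-\sum_{s=k}^{n-1}\alpha(\mC,s)/\binom{n-1}{s}$. Since $\alpha(\mC,s)\le\binom{n}{s}$ and $\binom{n}{s}/\binom{n-1}{s}=n/(n-s)$, the bound follows at once. For $k<n$, equality forces $\alpha(\mC,k)=\binom{n}{k}$, which says every $k$ columns are independent, i.e.\ the MDS condition. Conversely, an MDS code has $\alpha(\mC,s)=\binom{n}{s}$ for all $s\ge k$, as in the example after that proposition.

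You instead split by the \emph{rank} of the current span, as in the proof of Theorem~\ref{thm_E_k_simplex}, and compare each phase to the MDS phase via $a(V)\ge i$. Your route gives two things the paper's does not state explicitly:
\begin{itemize}
\item a phase-by-phase inequality $\E[T_i]\ge n/(n-i)$;
\item a geometric description of equality: for $i\le k-1$, no span of $i$ independent columns contains a further column.
\end{itemize}
The price is that for a general code the $a(V_i)$ are random. You therefore obtain only an inequality, and you need the strong Markov property plus a separate positive-probability event to get strictness. The distinct-columns decomposition gives an exact identity, so the equality case is immediate. It is also the identity the paper builds on throughout: duality, the Golay codes, and weight distributions.

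One wording fix in your converse. Take a minimal \emph{dependent} subset $D$ of the $k$ dependent columns, pick any $c\in D$, and set $S=D\setminus\{c\}$. Then $S$ is independent, $|S|\le k-1$, and $c\in\langle S\rangle$. This also covers zero columns and proportional columns, not just repeated ones.
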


\begin{remark}
    Theorem~\ref{thm_MDS_bound} provides a lower bound on the expectation and solves Problem~\ref{first_problem} for MDS codes. It also solves Problem~\ref{second_problem} for any choice of parameters $n, k$ and $q$ such that there exists an MDS code $\mC \subseteq \F_q^n$ of dimension $k$. In particular, assuming that the MDS conjecture~\cite{mds1} holds, we can write
    \[
    \mathbb{E}_{\text{opt}}[n,k]_q = n(H_n - H_{n-k}) \text{ when } q \ge n-1.
    \]
\end{remark}

It is well known that MDS codes only exist over sufficiently large finite fields. In fact, it has been conjectured (and proven in several instances) that $q \ge n-1$ is a necessary condition for the existence of an MDS code, with the exception of very few parameter sets that require ``only'' $q \ge n-2$; see~\cite{mds1,mds2,mds3,mds4}.
It is therefore natural to investigate what results can be achieved in this context when $q$ is too small to allow the existence of an MDS code.

\section{Performance of Simplex Codes} \label{simplex_section}
When focusing on small finite fields, it is natural to consider simplex codes. Recall that a generator matrix $G$ of the $k$-dimensional simplex code over $\F_q$ is obtained by organizing all nonzero elements of $\F_q^k$, up to nonzero scalar multiples, as columns of $G$. The simple structure of this generator matrix makes it possible to obtain a closed formula for $\E[G]$ using the $q$-analogue of a standard argument for the coupon collector's problem.

\begin{theorem} \label{thm_E_k_simplex}
    Let $\mathcal{C} \subseteq \mathbb{F}_q^n$ be the $q$-ary simplex code of dimension $k$, where $n = (q^k-1)/(q-1)$. We have 
    \begin{equation*} 
    \mathbb{E}[\mathcal{C}] = k + \sum_{i=1}^k \frac{q^{i-1}-1}{q^k - q^{i-1}}.
    \end{equation*}
\end{theorem}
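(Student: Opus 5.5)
We split the drawing process into $k$ phases, exactly as in the classical coupon collector argument. For $i=1,\dots,k$, let $X_i$ be the number of draws made while the span of the drawn columns has dimension $i-1$; the draw that ends the phase is the first one that raises the dimension to $i$. The total number of draws is then $X_1+\dots+X_k$. By linearity of expectation, $\E[\mC]=\sum_{i=1}^k \E[X_i]$, so it suffices to compute each $\E[X_i]$.

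To compute $\E[X_i]$, suppose the current span is a subspace $V\subseteq\F_q^k$ with $\dim V=i-1$. The key structural fact about the simplex generator matrix is that its columns are representatives of all one-dimensional subspaces of $\F_q^k$, each appearing exactly once. A drawn column fails to increase the rank exactly when it lies in $V$. The number of columns lying in $V$ therefore equals the number of points of $V$, namely $(q^{i-1}-1)/(q-1)$. This count depends only on $\dim V$ and not on which subspace $V$ is, which is what makes the phases clean.

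It follows that, conditioned on the history of the process, every draw in phase $i$ succeeds with the same probability
\[
p_i = 1-\frac{q^{i-1}-1}{q^k-1} = \frac{q^k-q^{i-1}}{q^k-1}.
\]
Hence $X_i$ is geometric with parameter $p_i$, and $\E[X_i]=1/p_i=(q^k-1)/(q^k-q^{i-1})$. Writing $q^k-1=(q^k-q^{i-1})+(q^{i-1}-1)$ gives
\[
\E[X_i]=1+\frac{q^{i-1}-1}{q^k-q^{i-1}}.
\]
Summing over $i=1,\dots,k$ yields the formula in Theorem~\ref{thm_E_k_simplex}.

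The only point that needs care is justifying that each $X_i$ is genuinely geometric. This rests on the success probability depending on $V$ only through its dimension; that holds because the column set of the simplex generator matrix is invariant under $\mathrm{GL}_k(\F_q)$, or, stated directly, every subspace of dimension $i-1$ contains the same number of projective points. Once this is in place, linearity of expectation applies with no independence assumptions needed, and the remaining algebra is routine.
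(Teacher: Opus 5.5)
Your proof is correct and follows essentially the same route as the paper: decompose the process into $k$ rank-increasing phases, observe that each phase is geometric with success probability $1-\frac{(q^{i-1}-1)/(q-1)}{n}$ because the simplex columns are exactly the projective points of $\F_q^k$, and sum the expectations by linearity. Your extra remark that the success probability depends on the current span only through its dimension is the justification the paper leaves implicit.
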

\begin{proof}
    Fix any generator matrix $G$ of $\mC$. For $i \in \{1, \ldots, k\}$, let $s_i(\mC)$ be the random variable that governs the number of draws until the selected columns span a space of dimension $i$, when the columns previously drawn span a space of dimension $i-1$. Note that the expected value of $s_1(\mC)$ is equal to 1, since all columns of $G$ are nonzero.
    Since the columns of $G$ are the elements of $\F_q^k$ up to multiples, 
    $s_i(\mathcal{C})$ is a geometric random variable with success probability
    \[
    \frac{n - \frac{q^{i-1}-1}{q-1}}{n}.
    \]
    By the linearity of expectation, we therefore have
    \[
    \begin{split}
    \mathbb{E}[\mathcal{C}] &= \mathbb{E}\left[ \sum_{i=1}^k s_i(\mathcal{C}) \right] = \sum_{i=1}^k \mathbb{E}[s_i(\mathcal{C})] 
    = k + \sum_{i=1}^k \frac{q^{i-1}-1}{q^k - q^{i-1}},
    \end{split}
    \]
    as desired.
\end{proof}

Based on experimental evidence, the simplex code performs best among all codes with the same parameters. However, we still do not have a formal proof of this fact.

\begin{conjecture}
    Let $\mC \subseteq \mathbb{F}_q^n$ be the $q$-ary simplex code of dimension $k$, where $n = (q^k-1)/(q-1)$. Then $\mC$ solves Problem~\ref{second_problem}.
\end{conjecture}

\section{A Duality Result} \label{duality_section}

A natural question is how the value of $\E[\mC]$ relates to the value $\E[\mC^\perp]$. While these two quantities do not determine each other in general, in this section we derive a duality result
that expresses $\E[\mC]$ in terms of the combinatorial structure of the dual code $\mC^\perp$. In Section~\ref{Hamming_section} we will illustrate how this result 
has a concrete application, namely computing
$\E[\mC]$ when~$\mC$ is the Hamming code.

\begin{definition}
    Let $\mathcal{C} \subseteq \F_q^n$ be a $k$-dimensional code and let $\pi_S: \F_q^n \to \F_q^{|S|}$ denote the projection onto the coordinates indexed by a set $S \subseteq\{1, \ldots, n\}$. We say that 
    $S$ is an \textbf{information set} for $\mC$ if $\pi_S(\mC)$ has dimension $k$.

    If $G$ is a generator matrix of $\mC$,
    then for $0 \leq s \leq n$ we denote by $g_j$ the $j$-th column of~$G$, and define
    \[
    \alpha(\mC,s) = \left|\left\{ S \subseteq \{1, \ldots, n\} \, : \, |S| = s, \, \langle g_j \, : \, j \in S \rangle = \F_q^k\right\}\right|,
    \]
    which does not depend on the choice of $G$ and counts the number of information sets of cardinality $s$ of $\mC$. 
\end{definition}

Note that $\alpha(\mC,s) = 0$ for $0 \leq s \leq k-1$.
Following the same reasoning as Proposition~\ref{independence_from_G_proposition},
it can be checked that 
$\alpha(\mC,s)$ only depends
on the code $\mathcal{C}$ that $G$ generates. We will therefore use the symbols $\alpha(\mathcal{C},s)$ and $\alpha(\mC,s)$ interchangeably.

We start by expressing $\mathbb{E}[\mC]$ in terms of the values $\alpha(\mathcal{C},s)$ we just introduced.
The proof is similar to that of~\cite[Lemma~1]{a_combinatorial_perspective} and is therefore omitted in this paper.

\begin{proposition} \label{E_k[C]_general_formula}
For any $k$-dimensional code $\mC \subseteq \F_q^n$
we have 
    \[
    \mathbb{E}[\mathcal{C}] = nH_n -  \sum_{s=k}^{n-1} \frac{\alpha(\mathcal{C},s)}{\binom{n-1}{s}}.
    \]
\end{proposition}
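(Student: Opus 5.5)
We plan to reduce the random process with repetition to a uniformly random ordering of the $n$ distinct columns, and then count information sets.

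\emph{Step 1 (reduction to distinct columns).} Let $\tau$ be the number of draws until the drawn columns have rank $k$. By the same argument as for the coupon collector problem, we decompose $\tau$ according to how many \emph{distinct} columns have been seen. Suppose $j$ distinct columns have been seen so far. Then the number of extra draws needed to see a new one is geometric with mean $n/(n-j)$. Conditioned on the set of seen columns, the next new column is uniform among the unseen ones. Hence the order in which distinct columns appear is a uniformly random permutation $\sigma$ of $\{1,\dots,n\}$. Let $T$ be the smallest $t$ such that $\{\sigma(1),\dots,\sigma(t)\}$ spans $\F_q^k$. Then $\tau$ equals the number of draws needed to see $T$ distinct columns. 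Since the geometric waiting times are independent of $\sigma$, this gives
\[
\E[\mC]=\E\Bigl[\sum_{j=0}^{T-1}\frac{n}{n-j}\Bigr]=\sum_{j=0}^{n-1}\frac{n}{n-j}\,\Pr[T>j].
\]

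\emph{Step 2 (tail probabilities).} The event $T>j$ occurs exactly when the first $j$ columns of $\sigma$ do not span $\F_q^k$. The set of these $j$ columns is a uniform $j$-subset, so $\Pr[T>j]=1-\alpha(\mC,j)/\binom{n}{j}$. Note that this is $1$ for $j<k$ and $0$ for $j=n$.

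\emph{Step 3 (algebra).} Substituting, we obtain
\[
\E[\mC]=\sum_{j=0}^{n-1}\frac{n}{n-j}-\sum_{j=0}^{n-1}\frac{n}{n-j}\,\frac{\alpha(\mC,j)}{\binom nj}.
\]
The first sum is $nH_n$. For the second, we use the identity $\frac{n}{n-j}\binom{n-1}{j}=\binom nj$, so the summand becomes $\alpha(\mC,j)/\binom{n-1}{j}$. Terms with $j<k$ vanish, which leaves the sum over $k\le s\le n-1$ and yields the claim.

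The main obstacle is Step 1: we must justify rigorously that the order of first appearances is uniform and independent of the geometric waiting times. This follows from the strong Markov property and the symmetry of the uniform draws. If that route proves cumbersome, an alternative is to write $\E[\mC]=\sum_{m\ge0}\Pr[\tau>m]$ and condition on the set of distinct columns seen among the first $m$ draws. This expresses the expectation through the counts $\alpha(\mC,s)$ weighted by surjection-type probabilities, and then requires a binomial-sum identity to collapse those weights.
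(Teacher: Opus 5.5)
Your proof is correct and follows essentially the standard coupon-collector argument the paper defers to (\cite[Lemma~1]{a_combinatorial_perspective}). You reduce to a uniformly random order of distinct columns with independent geometric waiting times of mean $n/(n-j)$, write $\Pr[T>j]=1-\alpha(\mC,j)/\binom{n}{j}$, and collapse via $\frac{n}{n-j}\binom{n-1}{j}=\binom{n}{j}$. The independence you flag as the main obstacle holds for exactly the reason you give, so the alternative route is unnecessary.
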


Before continuing, we illustrate how Proposition~\ref{E_k[C]_general_formula} can be used to easily compute the expectation for MDS codes.

\begin{example}
    Let $\mC \subseteq \F_q^n$ be an MDS code of dimension $k$ and let $G$ be a generator matrix of $\mC$. Since $G$ is an MDS matrix, every $k$ columns of $G$ are linearly independent. Thus, we have that
    \[
    \alpha(\mC,s) = \binom{n}{s} \quad \textnormal{for $k \le s \le n$}.
    \]
    By substituting these values into the formula of Proposition~\ref{E_k[C]_general_formula}, we obtain
    \[
    \mathbb{E}[\mC] = nH_n - \sum_{s=k}^{n-1} \frac{\binom{n}{s}}{\binom{n-1}{s}},
    \]
    which simplifies to $n(H_n - H_{n-k})$ after straightforward computations.
\end{example}

We now turn to the duality result.
We relate 
the value of $\alpha(\mathcal{C},s)$ to the structure of the dual code $\mC^\perp$. To do so,
it is convenient to introduce some auxiliary quantities. We denote the Hamming support of a vector $x \in \F_q^n$ as $\sigma(x)=\{i \, :\, x_i \neq 0\}$.
For a code $\mC \subseteq \F_q^n$ and a subset $S \subseteq \{1, \ldots, n\}$, we let $\mC(S)=\{x \in \mC \, : \, \sigma(x) \subseteq S\}$. The complement of a set $S$ is denoted by $S^{\mathsf{c}}=\{1, \ldots, n\} \setminus S$.

\begin{notation} \label{beta_definition}
For a code $\mC \subseteq \F_q^n$, $1 \le \ell \le k$, and $0 \le s \le n$, let
\begin{equation*} 
\beta_{\ell}(\mathcal{C},s) = |\{ S \subseteq \{1, \ldots, n\} \, : \, |S| = s, \, \dim(\mathcal{C}(S^{\mathsf{c}}))=\ell \}|.
\end{equation*}
\end{notation}

The main tool of this section is the following result.

\begin{lemma} \label{duality_lemma}
    Let $\mathcal{C} \subseteq \F_q^n$ be
    a code of dimension $k$. For all $\ell, s$ with
    $1 \le \ell \le k$ and $0 \le s \le n$ we have
    \begin{equation} \label{beta_relation}
    \beta_{\ell}(\mathcal{C},s) = \beta_{\ell+s-k}(\mathcal{C}^\perp,n-s).
    \end{equation}
    In particular,
    \[ 
    \alpha(\mathcal{C}, s) = \beta_{s-k}(\mathcal{C}^{\perp},n-s).
    \]
\end{lemma}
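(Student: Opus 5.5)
The plan is to prove a set-by-set identity and then sum over sets. For every $T \subseteq \{1,\ldots,n\}$ we will compare $\dim \mC(T)$ with $\dim \mC^\perp(T^{\mathsf{c}})$. Taking $T = S^{\mathsf{c}}$ with $|S|=s$, the complement of $T$ is $S$, which has size $s$. So if we show that $\dim \mC(S^{\mathsf{c}})=\ell$ holds exactly when $\dim \mC^\perp(S)=\ell+s-k$, then the map $S\mapsto S^{\mathsf{c}}$ is a bijection between the $s$-subsets counted by $\beta_\ell(\mC,s)$ and the $(n-s)$-subsets $U=S^{\mathsf{c}}$ with $\dim\mC^\perp(U^{\mathsf{c}})=\ell+s-k$. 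The latter are precisely the sets counted by $\beta_{\ell+s-k}(\mC^\perp,n-s)$, and this gives \eqref{beta_relation}. We will tacitly extend Notation~\ref{beta_definition} to all integer indices $\ell\ge 0$, with $\beta_\ell=0$ for $\ell<0$ or $\ell>\dim$. This is needed so that the "in particular" statement, which uses index $s-k$, makes sense.

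The key linear-algebra step is the standard shortening/puncturing identity. Fix $S$ and let $\pi_S$ be the projection. The kernel of $\pi_S$ restricted to $\mC$ is exactly $\mC(S^{\mathsf{c}})$, so
\[
\dim \pi_S(\mC) = k - \dim \mC(S^{\mathsf{c}}).
\]
Next we claim that $\pi_S(\mC)^\perp$, taken in $\F_q^{|S|}$, equals $\pi_S(\mC^\perp(S))$. The inclusion $\supseteq$ holds because a dual codeword supported on $S$ is orthogonal to $\mC$, and its inner product with $c$ only involves the coordinates in $S$. For $\subseteq$, a vector $y\in\F_q^{|S|}$ orthogonal to $\pi_S(\mC)$ extends by zeros to a vector of $\mC^\perp$ supported on $S$. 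Since $\pi_S$ is injective on $\mC^\perp(S)$, we get
\[
\dim \mC^\perp(S) = s - \dim\pi_S(\mC) = s-k+\dim\mC(S^{\mathsf{c}}).
\]
This is exactly the needed equivalence $\dim\mC(S^{\mathsf{c}})=\ell \iff \dim\mC^\perp(S)=\ell+s-k$.

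For the "in particular" part, $S$ is an information set exactly when $\dim\pi_S(\mC)=k$, i.e.\ when $\dim\mC(S^{\mathsf{c}})=0$. So $\alpha(\mC,s)=\beta_0(\mC,s)$, and applying the same bijection with $\ell=0$ gives $\beta_{s-k}(\mC^\perp,n-s)$. The only real obstacle is bookkeeping. Notation~\ref{beta_definition} restricts to $\ell\ge1$, so the case $\ell=0$ must be justified by the extended convention, or equivalently by running the set-wise identity directly with $\ell=0$. We also need to check that the ranges of indices stay consistent; for example, $\ell+s-k$ may exceed $n-k$ or be negative, and then both sides vanish. Neither issue requires anything beyond the dimension formula above.
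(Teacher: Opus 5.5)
Your proof is correct and follows essentially the paper's route. The paper likewise combines rank--nullity for $\pi_S$ on $\mC$ with the identity $\dim\mC^\perp(S)=s-k+\dim\mC(S^{\mathsf{c}})$ and the complement bijection $S\mapsto S^{\mathsf{c}}$; the only difference is that you prove that identity directly via $\pi_S(\mC)^\perp=\pi_S(\mC^\perp(S))$, whereas the paper imports it from a cited result ($|\mC(S)|=|\mC|\,q^{-(n-s)}|\mC^\perp(S^{\mathsf{c}})|$), and you make explicit the extension of the $\beta$ notation to index $0$ that the paper uses tacitly.
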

\begin{proof}
    Take an arbitrary set $S \subseteq \{1, \ldots, n\}$ of cardinality $s$. Using the rank-nullity theorem we obtain
    \begin{equation} \label{rank_nullity}
    \dim(\pi_S(\mathcal{C})) + \dim(\ker(\pi_S)) = k,
    \end{equation}
    which we can rewrite as
    \begin{equation} \label{rank_nullity_2}
        \dim(\pi_S(\mathcal{C})) + \dim(\mathcal{C}(S^{\mathsf{c}})) = k.
    \end{equation} 
    Moreover, by~\cite[Theorem~24]{ravagnani} we have 
    \[
    |\mathcal{C}(S)| = \frac{|\mathcal{C}|}{q^{n-s}} |\mathcal{C}^{\perp}(S^{\mathsf{c}})|,
    \]
    i.e.,
    \begin{equation} \label{rank_nullity_dual}
    \dim(\mathcal{C}(S)) = k-n+s+\dim(\mathcal{C}^{\perp}(S^{\mathsf{c}})).
    \end{equation}
    Therefore, from~\eqref{rank_nullity} we know that $\dim(\pi_S(\mathcal{C})) = t$ if and only if $\dim(\mathcal{C}(S^{\mathsf{c}})) = k-t$, and~\eqref{rank_nullity_dual} tells us that
    the latter equality is equivalent to $\dim(\mathcal{C}^{\perp}(S)) = s-t$. All of this shows that there exists a bijection
    \begin{multline*}
    \{ S \subseteq  \{1, \ldots, n\} \, : \,  |S| = s, \, \dim(\mathcal{C}(S^{\mathsf{c}})) = k-t \} \\  \quad \to \{ S \subseteq \{1, \ldots, n\} \, : \, |S| = n-s, \, \dim(\mathcal{C}^{\perp}(S^{\mathsf{c}})) = s-t  \},
    \end{multline*}
    from which we obtain the first part of the lemma. For the second part, it suffices to use the fact that
    \begin{equation} \label{dual_eq_alpha_beta0}
        \alpha(\mathcal{C},s) = \beta_0(\mathcal{C},s),
    \end{equation}
    which easily follows from the definitions.
    Combining this equality with \eqref{beta_relation} we obtain the second part of the lemma.
\end{proof} 

The next corollary follows directly from Proposition \ref{E_k[C]_general_formula} and Lemma \ref{duality_lemma}. We will use it in the next section.

\begin{corollary}
    Let $\mathcal{C} \subseteq \F_q^n$ be
    a code of dimension $k$. We have 
    \begin{equation} \label{exp_in_terms_of_beta}
        \mathbb{E}[\mathcal{C}] = nH_n -  \sum_{s=k}^{n-1} \frac{\beta_{s-k}(\mathcal{C}^{\perp},n-s)}{\binom{n-1}{s}}.
    \end{equation}
\end{corollary}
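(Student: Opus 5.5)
The corollary follows by direct substitution, so I would simply chain the two earlier results. First, I would take the formula of Proposition~\ref{E_k[C]_general_formula},
\[
\mathbb{E}[\mathcal{C}] = nH_n - \sum_{s=k}^{n-1} \frac{\alpha(\mathcal{C},s)}{\binom{n-1}{s}},
\]
which holds for any $k$-dimensional code $\mC \subseteq \F_q^n$. Second, I would replace each numerator $\alpha(\mathcal{C},s)$ for $k \le s \le n-1$ using the second part of Lemma~\ref{duality_lemma}, namely $\alpha(\mathcal{C},s)=\beta_{s-k}(\mathcal{C}^\perp,n-s)$. This gives~\eqref{exp_in_terms_of_beta} term by term, with the denominators $\binom{n-1}{s}$ unchanged.

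The only point needing care is the index range. In Notation~\ref{beta_definition} the subscript $\ell$ of $\beta_\ell$ was declared for $1\le \ell\le k$, yet the identity uses $\ell=s-k$, which can equal $0$. Here $\mC^\perp$ has dimension $n-k$, so for $k \le s \le n-1$ we get $0\le s-k\le n-k-1$. I would therefore note that the definition of $\beta_\ell$ extends verbatim to $0\le \ell\le \dim$ of the code. The proof of Lemma~\ref{duality_lemma} already uses $\beta_0$ in~\eqref{dual_eq_alpha_beta0}, and its bijection argument is valid for every value of the dimension, including $0$.

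With the range $0 \le s-k \le n-k-1$ in hand, every term of the sum is well defined and the corollary follows. I expect no real obstacle beyond this bookkeeping remark.
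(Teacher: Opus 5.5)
Your proposal is correct and is exactly the paper's argument: the corollary follows by substituting $\alpha(\mC,s)=\beta_{s-k}(\mC^\perp,n-s)$ from Lemma~\ref{duality_lemma} into Proposition~\ref{E_k[C]_general_formula}, term by term for $k\le s\le n-1$. Your remark that $\beta_\ell$ must be read for $0\le \ell\le \dim$ of the code, so that here $0\le s-k\le n-k-1$ for $\mC^\perp$, is valid bookkeeping that the paper leaves implicit.
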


\section{Performance of Hamming and Ternary Golay Codes} \label{Hamming_section}

We apply the results of Section~\ref{duality_section} to compute the value of $\E[\mC]$, where $\mC$ is the Hamming code, 
the ternary Golay code, and the extended ternary Golay code.

The following result represents a direct application of Lemma \ref{duality_lemma} and allows us to obtain the value of the expectation of the Hamming code in terms of its corresponding dual code, namely, the simplex code.

\begin{theorem} \label{thm_E_k_Hamming}
    Let $\mathcal{C} \subseteq \mathbb{F}_q^n$ be the $q$-ary Hamming code of redundancy $r$, where $n=(q^r-1)/(q-1)$. We have 
    \[
    \mathbb{E}[\mathcal{C}] = nH_n -  \sum_{\ell=1}^{r} \frac{1}{\binom{n-1}{n-\ell}}\frac{\prod_{i = 0}^{\ell-1} \frac{q^r-q^i}{q-1}}{\ell !}.
    \]
\end{theorem}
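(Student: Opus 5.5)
We will derive the formula from \eqref{exp_in_terms_of_beta}, using that $\mC^\perp$ is the $q$-ary simplex code of dimension $r$. Here $\mC$ has dimension $k=n-r$, and substituting $\ell = s-k$ (so $n-s = r-\ell$) the sum runs over $\ell = 0,\ldots,r-1$. The term is
\[
\frac{\beta_{\ell}(\mC^\perp, r-\ell)}{\binom{n-1}{k+\ell}},
\]
where $\beta_\ell(\mC^\perp, r-\ell)$ counts sets $T$ of size $r-\ell$ such that $\dim \mC^\perp(T^{\mathsf c}) = \ell$. It is cleaner to count via the complementary quantity: by \eqref{rank_nullity_2} applied to $\mC^\perp$ (dimension $r$), $\dim \mC^\perp(T^{\mathsf c}) = \ell$ if and only if $\dim \pi_T(\mC^\perp) = r-\ell = |T|$. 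In other words, the columns $h_j$, $j\in T$, of a generator matrix $H$ of the simplex code are linearly independent. Hence $\beta_\ell(\mC^\perp,r-\ell)$ is the number of $(r-\ell)$-subsets of columns of $H$ that are linearly independent. (Equivalently, one may read this off directly from Proposition~\ref{E_k[C]_general_formula}: $S$ is an information set of $\mC$ iff the complement columns of the parity-check matrix $H$ are independent.)

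Next we count these subsets. The columns of $H$ are representatives of all points of $\mathrm{PG}(r-1,q)$, one per one-dimensional subspace of $\F_q^r$. An ordered sequence of $m$ linearly independent columns is obtained by choosing the $i$-th column (for $i=0,\dots,m-1$) outside the span of the previous ones. That span contains $(q^i-1)/(q-1)$ of the columns, leaving $n-(q^i-1)/(q-1) = (q^r-q^i)/(q-1)$ choices. Dividing by $m!$ to pass from ordered to unordered sets gives
\[
\frac{1}{m!}\prod_{i=0}^{m-1}\frac{q^r-q^i}{q-1}
\]
independent $m$-subsets.

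Finally we reindex. Put $m = r-\ell$, which ranges over $1,\ldots,r$ as $\ell$ ranges over $0,\ldots,r-1$. Then $\binom{n-1}{k+\ell} = \binom{n-1}{n-m}$, which matches the stated formula with the statement's summation index $\ell$ playing the role of $m$. Note that the excluded term $s=n$ would correspond to $m=0$, so nothing is lost at the endpoints. The only delicate step is the bookkeeping of the index shift between $\beta_{s-k}(\mC^\perp,n-s)$ and the independence condition. The counting itself is the standard $q$-analogue argument already used in Theorem~\ref{thm_E_k_simplex}, so we do not expect any real obstacle beyond that.
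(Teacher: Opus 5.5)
Your proposal is correct and follows essentially the same route as the paper. Both start from \eqref{exp_in_terms_of_beta} with $\mC^\perp$ the simplex code, use \eqref{rank_nullity_2} to turn $\beta$ into a count of linearly independent column subsets of the simplex generator matrix, and count those with the ordered $q$-analogue selection argument divided by the factorial. The only difference is cosmetic: you index by $\ell=s-k$ and reindex with $m=r-\ell$ at the end, whereas the paper substitutes $\ell=n-s$ directly.
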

\begin{proof}
    We have that $\dim(\mC) = n-r$, hence we can rewrite \eqref{exp_in_terms_of_beta} as 
    \[
    \mathbb{E}[\mathcal{C}] = n H_n - \sum_{\ell=1}^{r} \frac{\beta_{r-\ell}(\mathcal{C}^{\perp},\ell)}{\binom{n-1}{n-\ell}},
    \]
    where the dual code $\mathcal{C}^{\perp}$ is the $[n,r]_q$ simplex code. Applying~\eqref{rank_nullity_2} to Notation~\ref{beta_definition} we obtain
    \[
    \beta_{r-\ell}(\mathcal{C}^\perp,\ell) =  |\{ S \subseteq \{1, \ldots, n\} : |S| = \ell, \dim(\pi_S(\mathcal{C}^\perp))=\ell \}|.
    \]
    It remains to count the number of subsets of cardinality $\ell$ whose corresponding columns are linearly independent. To do this, we use again the fact that the columns of any generator matrix of the simplex code are all the nonzero vectors of $\mathbb{F}_q^r$ up to nonzero scalar multiples. Hence we have
    \[
    \beta_{r-\ell}(\mathcal{C}^\perp,\ell) = \frac{\prod_{i=0}^{\ell-1} \Big(\frac{q^r-1}{q-1} - \frac{q^i-1}{q-1}\Big)}{\ell!} = \frac{\prod_{i=0}^{\ell-1} \frac{q^r-q^i}{q-1}}{\ell!},
    \]
    from which the statement follows.
\end{proof}

We now turn to ternary Golay codes.
We start by establishing a refinement of the formula in Proposition \ref{E_k[C]_general_formula}, which involves the minimum distance of the code and results in fewer unknown values to compute. 

\begin{corollary} \label{E[C]_general_enhanced}
    Let $\mC \subseteq \F_q^n$ be a code of dimension $k$ and minimum distance $d$. We have
    \[
    \E[\mC] = n(H_n - H_{d-1}) -  \sum_{s=k}^{n-d} \frac{\alpha(\mathcal{C},s)}{\binom{n-1}{s}}.
    \]
\end{corollary}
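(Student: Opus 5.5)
We start from Proposition~\ref{E_k[C]_general_formula}, which gives
\[
\E[\mC] = nH_n - \sum_{s=k}^{n-1} \frac{\alpha(\mC,s)}{\binom{n-1}{s}},
\]
and split the sum into the range $k \le s \le n-d$ and the range $n-d+1 \le s \le n-1$. The point is that in the second range the value $\alpha(\mC,s)$ is known exactly. We then show that those terms sum to $n H_{d-1}$.

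\textbf{Step 1: every large set is an information set.} Let $S$ be a set with $|S| = s \ge n-d+1$. Then $|S^{\mathsf c}| \le d-1$, so a codeword supported in $S^{\mathsf c}$ has weight at most $d-1$ and must be zero. Hence $\mC(S^{\mathsf c}) = \{0\}$. By~\eqref{rank_nullity_2} this gives $\dim(\pi_S(\mC)) = k$, so every $s$-subset is an information set and $\alpha(\mC,s) = \binom{n}{s}$. The boundary case $s = n-d+1 \ge k$ is consistent with the Singleton bound, so all these indices lie in the range $s \ge k$ of the original sum.

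\textbf{Step 2: evaluate the tail.} Using $\binom{n}{s}/\binom{n-1}{s} = n/(n-s)$, the tail becomes
\[
\sum_{s=n-d+1}^{n-1} \frac{n}{n-s} = n\sum_{j=1}^{d-1}\frac{1}{j} = nH_{d-1},
\]
where we substituted $j = n-s$. Subtracting this from $nH_n$ and keeping the remaining sum over $k \le s \le n-d$ gives the claimed formula.

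The only delicate point is the boundary bookkeeping: we need $n-d \ge k-1$ so that the split is valid, and this is exactly the Singleton bound. If $n - d = k-1$, the code is MDS, the remaining sum is empty, and the formula reduces to $n(H_n - H_{n-k})$, in agreement with Theorem~\ref{thm_MDS_bound}.
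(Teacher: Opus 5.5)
Your proposal is correct and follows essentially the same route as the paper: both use $\alpha(\mC,s)=\beta_0(\mC,s)$ (equivalently \eqref{rank_nullity_2}) to show that every $s$-subset with $s\ge n-d+1$ is an information set, so $\alpha(\mC,s)=\binom{n}{s}$, and then sum the tail $\sum_{s=n-d+1}^{n-1} n/(n-s)$ to get $nH_{d-1}$. Your Singleton-bound check that the split point satisfies $n-d+1\ge k$ is a small detail the paper leaves implicit.
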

\begin{proof}
    From \eqref{dual_eq_alpha_beta0} it follows that
   $    \alpha(\mathcal{C},s) = | \{ S \subseteq [n] \, : \,  |S| = s, \, \dim(\mathcal{C}(S^{\mathrm{c}})) = 0 \} |$.
    This means that, if $n-s \leq d-1$,
    we have $\dim(\mathcal{C}(S^{\mathrm{c}})) = 0$,  and thus $\alpha(\mathcal{C},s) = \binom{n}{s}$. 
    Therefore,
    \[
    \begin{split}
    \E[\mC] & = nH_n -  \sum_{s=k}^{n-1} \frac{\alpha(\mathcal{C},s)}{\binom{n-1}{s}} \\
    & = nH_n -  \Bigg(\sum_{s=k}^{n-d} \frac{\alpha(\mathcal{C},s)}{\binom{n-1}{s}} + \sum_{s=n-d+1}^{n-1} \frac{n}{n-s}\Bigg) \\
    & = n(H_n - H_{d-1}) -  \sum_{s=k}^{n-d} \frac{\alpha(\mathcal{C},s)}{\binom{n-1}{s}}. \qedhere
    \end{split}
    \]
\end{proof}

When the code $\mC$ is the ternary Golay code or the extended ternary Golay code, Corollary~\ref{E[C]_general_enhanced} provides a formula that requires the computation of only one of the $\alpha(\mC,s)$ values, thereby making Problem~\ref{first_problem} easier to solve for these two codes. We will utilize the weight enumerator of a code, which appears in this section and features even more prominently in Section~\ref{weight_distr_section}.

\begin{definition}
    Let $\mC \subseteq \F_q^n$ be a code. The \textbf{homogeneous weight enumerator} of $\mC$ is the polynomial
    \[
    W_{\mC}(X,Y) = \sum_{i=0}^n W_i(\mC) X^{n-i} Y^i, \, \textnormal{ where } \, W_i(\mC) = |\{ c \in \mC \, : \, \omega^H(c) = i \}|.
    \]
This polynomial can also be expressed in the one-variable form $W_{\mC}(Z)$, called the \textbf{weight enumerator}, which is connected to $W_{\mC}(X,Y)$ in the following ways:
    \[
    W_{\mC}(X,Y) = X^nW_{\mC}(X^{-1}Y), \quad W_{\mC}(Z) = W_{\mC}(1,Z).
    \]
The numbers $\{W_i(\mC) \, :\, 0 \le i \le n\}$ are called the \textbf{weight distribution} of $\mC$.
\end{definition}

\begin{theorem} \label{exp_ternary_Golay}
    Let $\mC$ be the ternary Golay code of length $n=11$, dimension $k=6$, and minimum distance $d = 5$. We have
    \[
    \E[\mC] = n(H_n - H_{d-1}) - \frac{\binom{n}{k} - \frac{W_k(\mC^\perp)}{2}}{\binom{n-1}{k}} \approx 8.416.
    \] 
\end{theorem}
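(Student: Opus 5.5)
Apply Corollary~\ref{E[C]_general_enhanced} with $n=11$, $k=6$, $d=5$. The sum $\sum_{s=k}^{n-d}$ then runs over $s$ from $6$ to $6$, so it has the single term $\alpha(\mC,6)/\binom{10}{6}$. Everything reduces to showing
\[
\alpha(\mC,6)=\binom{11}{6}-\frac{W_6(\mC^\perp)}{2}.
\]

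To count $\alpha(\mC,6)$ we use complements. By \eqref{dual_eq_alpha_beta0}, a $6$-subset $S$ is an information set exactly when $\mC(S^{\mathsf{c}})=\{0\}$. Alternatively, by Lemma~\ref{duality_lemma} with $s=k$, $\alpha(\mC,6)=\beta_0(\mC^\perp,5)$, i.e.\ we count $5$-sets $T$ with $\mC^\perp(T^{\mathsf{c}})=\{0\}$, which is the same complement condition. We use the direct form. So $S$ fails to be an information set iff some nonzero codeword of $\mC$ has support inside the $5$-set $S^{\mathsf{c}}$. Since $d=5$, such a codeword has weight exactly $5$ and support exactly $S^{\mathsf{c}}$. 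Two such codewords with the same support are scalar multiples: otherwise a suitable combination would have weight at most $4$. Hence the bad $6$-sets correspond bijectively to supports of weight-$5$ codewords, and their number is $W_5(\mC)/(q-1)=W_5(\mC)/2$.

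The step that needs care is matching this count to the form in the statement, which uses $W_k(\mC^\perp)=W_6(\mC^\perp)$. For the ternary Golay code, the known weight distributions give $W_5(\mC)=132$. The dual $\mC^\perp$ is the $[11,5,6]$ code, with $W_6(\mC^\perp)=132$. Both correspond to $66$ supports, the blocks of the Steiner system $S(4,5,11)$ and their complements. So $W_5(\mC)/2=W_6(\mC^\perp)/2=66$.

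A more intrinsic argument reaches $W_6(\mC^\perp)$ directly. A $6$-set $S$ is an information set of $\mC$ iff $S^{\mathsf{c}}$ is an information set of $\mC^\perp$. Equivalently, $\mC^\perp(S)=\{0\}$: by \eqref{rank_nullity_dual}, $\dim\mC(S^{\mathsf{c}})=\dim\mC^\perp(S)$ when $|S|=k$. Since $\mC^\perp$ has minimum distance $6$, the same uniqueness-of-support argument shows the bad sets are supports of weight-$6$ words of $\mC^\perp$, giving $W_6(\mC^\perp)/2$ directly. This route avoids quoting $W_5(\mC)$ and is the one I would present.

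Substituting into Corollary~\ref{E[C]_general_enhanced} gives the closed formula:
\[
\E[\mC]=11(H_{11}-H_4)-\frac{462-66}{210}=11(H_{11}-H_4)-\frac{396}{210}\approx 8.416.
\]
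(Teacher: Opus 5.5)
Your proposal is correct and follows the paper's proof: reduce via Corollary~\ref{E[C]_general_enhanced} to the single term $\alpha(\mC,6)$, note that a $6$-set fails to be an information set exactly when it contains the support of a nonzero codeword of $\mC^\perp$, and use $d(\mC^\perp)=6$ to count these sets as $W_6(\mC^\perp)/2$. The one difference is that the paper derives $W_6(\mC^\perp)=132$ and $d(\mC^\perp)=6$ from the Golay weight enumerator via the MacWilliams identities, whereas you quote them, with the $W_5(\mC)/2$ count and the $S(4,5,11)$ remark as a consistency check.
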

\begin{proof}
We have $\alpha(\mC,s) = \binom{n}{s}$ when $s \ge n-d+1 = 7$ from Corollary \ref{E[C]_general_enhanced}. Thus, we only need to compute $ \alpha(\mC,k)$. Notice that
\[
\begin{split}
    \alpha(\mC,k) &= |\{ S \subseteq \{1, \dots, n\} \, : \, S \textnormal{ is a minimal information set of $\mC$}  \}| \\
    &= |\{ S \subseteq \{1, \dots, n\} \, : \, |S| = k \, , \, \nexists \, x \in \mC^\perp \setminus \{0\} \, \textnormal{ with } \, \sigma(x) \subseteq S \}| \\
    &= \binom{n}{k} - |\{ S \subseteq \{1, \dots, n\} \, : \, |S| = k \, , \, \exists \, x \in \mC^\perp \setminus \{0\} \, \textnormal{ with } \, \sigma(x) \subseteq S \}|.
\end{split}
\]
The homogeneous weight enumerator of the ternary Golay code is 
\begin{equation} \label{ternary_golay_macwilliams}
    W_{\mC}(X,Y) = X^{11} + 132 X^6 Y^5 + 132 X^5 Y^6  + 330 X^3Y^8 + 110 X^2 Y^9 + 24 Y^{11};
\end{equation}
see~\cite[Example~7.6.2]{fundamental_err_corr_codes}. We now combine~\eqref{ternary_golay_macwilliams} with the  MacWilliams Identities: 
\[
    W_{\mC^\perp}(X,Y) = \frac{1}{|\mC|} W_{\mC}(X+(q-1)Y,X-Y);
\]
see e.g.~\cite[Chapter~5, Theorem~13]{mds2}.
We obtain that the homogeneous weight enumerator of the dual code~$\mC^{\perp}$ is
\[
W_{\mC^\perp}(X,Y) = X^{11} + 132 X^5 Y^6 + 110 X^2 Y^9.
\] 
Thus, 
\[
\begin{split}
    \alpha(\mC,k) &= \binom{n}{k} - |\{ S \subseteq \{1, \dots, n\} \, : \, |S| = k \, , \, \exists \, x \in \mC^\perp \setminus \{0\} \, \textnormal{ with } \, \sigma(x) \subseteq S \}| \\
    &= \binom{n}{k} - \frac{W_k(\mC^\perp)}{2},
\end{split}
\]
where the latter equality holds because if the support of a nonzero codeword $x \in \mC^\perp$ is contained in a $k$-subset $S \subseteq \{1, \dots, n\}$, then $2x$ is the only other nonzero codeword of $\mC^\perp$ whose support is contained in $S$. This follows from the fact that the minimum distance of the dual code $\mC^\perp$ is $d(\mC^\perp) = 6$. 
\end{proof}

The analogous result for the extended ternary Golay code can be obtained with a similar reasoning
and the proof is therefore omitted.

\begin{theorem} \label{exp_extended_ternary_Golay}
    Let $\mC$ be the extended ternary Golay code of length $n=12$, dimension $k=6$, and minimum distance $d = 6$. Then
    \[
    \E[\mC] = n(H_n - H_{d-1}) - \frac{\binom{n}{k} - \frac{W_k(\mC)}{2}}{\binom{n-1}{k}} \approx 8.124.
    \]
\end{theorem}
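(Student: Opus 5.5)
Mirroring the proof of Theorem~\ref{exp_ternary_Golay}, the plan starts from Corollary~\ref{E[C]_general_enhanced} with $n=12$, $k=6$, $d=6$. This gives
\[
\E[\mC]=n(H_n-H_{d-1})-\sum_{s=k}^{n-d}\frac{\alpha(\mC,s)}{\binom{n-1}{s}}.
\]
Since $n-d=6=k$, the sum consists of the single term $s=k=6$. We have $\alpha(\mC,s)=\binom{n}{s}$ for $s\ge 7$, as already absorbed in that corollary. It therefore suffices to show
\[
\alpha(\mC,6)=\binom{12}{6}-\frac{W_6(\mC)}{2}.
\]

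To compute $\alpha(\mC,6)$, observe as before that a $6$-subset $S$ is an information set if and only if $\dim\mC(S^{\mathsf c})=0$. By the rank--nullity relation~\eqref{rank_nullity_2}, this is equivalent, via~\eqref{rank_nullity_dual}, to $\dim\mC^\perp(S)=0$. So $S$ fails to be an information set exactly when some nonzero codeword of $\mC^\perp$ is supported inside $S$.

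The key structural fact is that the extended ternary Golay code is self-dual, so $\mC^\perp=\mC$. This is why $W_k(\mC)$ appears in place of $W_k(\mC^\perp)$, and it replaces the MacWilliams computation used for the non-extended code. I would cite self-duality together with the known weight enumerator
\[
W_{\mC}(X,Y)=X^{12}+264X^6Y^6+440X^3Y^9+24Y^{12}
\]
from~\cite{fundamental_err_corr_codes}. Self-duality can also be checked directly: the code has dimension $6=n/2$, and all weights are divisible by $3$, so every codeword is self-orthogonal over $\F_3$, hence $\mC\subseteq\mC^\perp$ with equal dimensions.

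The counting step, which I expect to be the only real point to verify, is to show that each bad $6$-set $S$ contains the support of exactly one pair $\{x,2x\}$ of nonzero codewords. Because $d=6$, any nonzero codeword supported in a $6$-set has support exactly $S$. If two codewords $x,y$ with $y\notin\langle x\rangle$ both had support $S$, a suitable combination $y-\lambda x$ would cancel a coordinate and be a nonzero codeword of weight less than $6$, which is a contradiction. Thus $\mC(S)=\langle x\rangle$, and the bad sets are in bijection with weight-$6$ codewords modulo scalars, giving $W_6(\mC)/2$ of them. Substituting into the corollary yields the formula. The numerical value $\approx 8.124$ then follows from
\[
\alpha(\mC,6)=924-132=792, \qquad \binom{11}{6}=462,
\]
by routine arithmetic.
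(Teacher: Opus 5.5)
Your proof is correct and is essentially the argument the paper has in mind when it omits this proof as analogous to Theorem~\ref{exp_ternary_Golay}. Corollary~\ref{E[C]_general_enhanced} reduces everything to $\alpha(\mC,6)$, the non-information $6$-sets are exactly the supports of weight-$6$ codewords of $\mC^\perp$ counted up to scalars, and self-duality replaces the MacWilliams step, which is why $W_k(\mC)$ appears. Your cancellation argument for $\mC(S)=\langle x\rangle$ and the values $\alpha(\mC,6)=792$ and $\E[\mC]=12(H_{12}-H_5)-12/7\approx 8.124$ all check out. One small point on your optional direct proof of self-duality: going from $x\cdot x=0$ for all $x\in\mC$ to $\mC\subseteq\mC^\perp$ uses polarization, which is valid because $2$ is invertible in $\F_3$.
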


\section{Expectation and Weight Distributions} \label{weight_distr_section}

In the previous section we computed the value of $\E[\mC]$ for the ternary Golay code and the extended ternary Golay code using knowledge about their weight enumerator. It is natural to ask whether this is a general phenomenon and if the expectation $\E[\mC]$ can be always expressed in terms of the weight enumerator of $\mC$ (or equivalently of $\mC^\perp$). The following example provides a negative answer. 

\begin{example} \label{example_wd_different_E} 
    Let $\mC_1$ and $\mC_2$ be the binary codes of length $n=12$ and dimension $k=3$ generated by the matrices
    \[
    G_1 = \begin{pmatrix}
    1 & 1 & 1 & 0 & 0 & 0 & 0 & 0 & 0 & 0 & 0 & 0 \\
    0 & 0 & 0 & 1 & 1 & 1 & 1 & 0 & 0 & 0 & 0 & 0 \\
    0 & 0 & 0 & 0 & 0 & 0 & 0 & 1 & 1 & 1 & 1 & 1
    \end{pmatrix}
    \]
    and
    \[
    \quad G_2 = \begin{pmatrix}
    1 & 0 & 0 & 0 & 0 & 0 & 1 & 1 & 1 & 1 & 1 & 1 \\
    0 & 1 & 1 & 0 & 0 & 0 & 1 & 1 & 1 & 1 & 1 & 1 \\
    0 & 0 & 0 & 1 & 1 & 1 & 1 & 1 & 1 & 1 & 1 & 1
    \end{pmatrix},
    \]
    respectively. One can check that $\mC_1$ and $\mC_2$ have the same weight enumerator, namely,
    \[
    W_{\mC_1}(Z) = 1 + Z^3 + Z^4 + Z^5 + Z^7 + Z^8 + Z^9 + Z^{12} = W_{\mC_2}(Z).
    \]
    On the other hand, we have 
    \[
        \mathbb{E}[\mC_1] = \frac{1229}{210} \approx 5.852 \quad \textnormal{and} \quad \mathbb{E}[\mC_2] = \frac{2633}{462} \approx 5.699.
    \]
    Observe that $\mC_1$ and $\mC_2$ are inequivalent codes, as otherwise one would necessarily have 
    $\E[\mC_1]=\E[\mC_2]$.
\end{example}

At this point, one may wonder if the value of $\E[\mC]$ can be expressed in terms of some combinatorial invariant of the code $\mC$. While we have provided evidence that the weight distribution is not such an invariant, in this section we prove that a finer invariant suffices, namely, the weight distributions of the extension codes $\mC \otimes_{\F_q} \F_{q^m}$ for $1 \le m \le n$.

\begin{definition}
    Let $\mC \subseteq \F_q^n$ be a code 
    and let $m \ge 1$ be an integer. The \textbf{extension code} of $\mC$ over $\F_{q^m}$, denoted by $\mC \otimes_{\F_q} \F_{q^m}$, is the set of $\F_{q^m}$-linear combinations of the codewords of $\mC$.
\end{definition}

The rest of this section is devoted to proving the following main result connecting $\E[\mC]$ to the weight distributions of the extension codes.

\begin{theorem} \label{expectation_weight_distr}
    Let $\mC \subseteq \F_q^n$ be a code of dimension $k$ and minimum distance $d$. We have  
   \[
    \E[\mC] = n(H_n - H_{d-1}) - \sum_{r=k}^{n-d} \frac{1}{\binom{n-1}{r}} \sum_{\ell=0}^{n-r} \binom{n-\ell}{r} \sum_{m=0}^n (-1)^m W_{\ell}(\mC  \otimes_{\F_q} \F_{q^m}) \gamma(q,m,n),
\]
where 
\[
\gamma(q,m,n) = \sum_{j=m}^n \Bigg( \prod_{\nu=0}^{j-1} \frac{1}{q^j - q^\nu} \Bigg) q^{\binom{j}{2} + \binom{j-m}{2}} \qbinom{j}{m}_q, 
\]
and with the convention that $\mC  \otimes_{\F_q} \F_{q^m}$ is the zero code when $m=0$.
\end{theorem}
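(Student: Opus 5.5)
We start from Corollary~\ref{E[C]_general_enhanced}, which already gives the term $n(H_n-H_{d-1})$ and reduces everything to the values $\alpha(\mC,r)$ for $k\le r\le n-d$. It therefore suffices to prove, for each such $r$,
\[
\alpha(\mC,r)=\sum_{\ell=0}^{n-r}\binom{n-\ell}{r}\sum_{m=0}^n(-1)^m W_\ell(\mC\otimes_{\F_q}\F_{q^m})\,\gamma(q,m,n).
\]
By~\eqref{dual_eq_alpha_beta0}, $\alpha(\mC,r)$ counts the $r$-subsets $S$ with $\mC(S^{\mathsf c})=\{0\}$. To detect the condition $\dim V=0$ for $V=\mC(S^{\mathsf c})$, I would use the Möbius function of the subspace lattice:
\[
\sum_{U\le V}(-1)^{\dim U}q^{\binom{\dim U}{2}}=[\dim V=0].
\]
A subspace $U\le\mC$ lies in $\mC(S^{\mathsf c})$ exactly when $\sigma(U)\subseteq S^{\mathsf c}$, where $\sigma(U)$ denotes the union of the supports of the elements of $U$. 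Swapping the order of summation, each $U$ with $|\sigma(U)|=\ell$ is counted by $\binom{n-\ell}{r}$ sets $S$. Writing $A_{j,\ell}$ for the number of $j$-dimensional subcodes of $\mC$ with support of size $\ell$, this gives
\[
\alpha(\mC,r)=\sum_{\ell=0}^{n-r}\binom{n-\ell}{r}\sum_{j\ge0}(-1)^jq^{\binom j2}A_{j,\ell}.
\]
This already matches the outer structure of the statement.

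Next, I would relate the numbers $A_{j,\ell}$ to weight distributions of extension codes. Fix an $\F_q$-basis $\omega_1,\dots,\omega_m$ of $\F_{q^m}$. A codeword $c\in\mC\otimes_{\F_q}\F_{q^m}$ can be written uniquely as $c=\sum_i\omega_i c_i$ with $c_i\in\mC$. The span $D=\langle c_1,\dots,c_m\rangle$ is a subcode with $\sigma(D)=\sigma(c)$. Conversely, for a fixed $j$-dimensional $D$, the number of tuples $(c_1,\dots,c_m)\in D^m$ spanning $D$ equals the number of surjections $\F_q^m\to D$, namely $\prod_{\nu=0}^{j-1}(q^m-q^\nu)$. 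Hence
\[
W_\ell(\mC\otimes_{\F_q}\F_{q^m})=\sum_{j=0}^{m}A_{j,\ell}\prod_{\nu=0}^{j-1}(q^m-q^\nu)\qquad(0\le m\le n).
\]
This is consistent with the convention at $m=0$, where the zero code has $W_\ell=[\ell=0]=A_{0,\ell}$. Since $j\le k\le n$, the system is lower triangular in $(m,j)$ for $0\le m,j\le n$, with nonzero diagonal entries, so it can be inverted.

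The main obstacle is carrying out this inversion explicitly. I would use $q$-binomial inversion. Writing $\prod_{\nu<j}(q^m-q^\nu)=q^{\binom j2}(q-1)^j[j]_q!\qbinom{m}{j}_q$ reduces the system to Gaussian-binomial form, and the inverse of $\bigl(\qbinom{m}{j}_q\bigr)$ is $\bigl((-1)^{m-j}q^{\binom{m-j}{2}}\qbinom{m}{j}_q\bigr)$. This yields
\[
A_{j,\ell}=\Bigl(\prod_{\nu=0}^{j-1}\frac{1}{q^j-q^\nu}\Bigr)\sum_{m=0}^{j}(-1)^{j-m}q^{\binom{j-m}{2}}\qbinom{j}{m}_q W_\ell(\mC\otimes_{\F_q}\F_{q^m}),
\]
using $\prod_{\nu<j}(q^j-q^\nu)=q^{\binom j2}(q-1)^j[j]_q!$. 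I would verify this formula on small cases, for instance $j=1$.

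Finally, I would substitute this into the weighted sum $\sum_j(-1)^jq^{\binom j2}A_{j,\ell}$. The signs combine as $(-1)^j(-1)^{j-m}=(-1)^m$. Extending $j$ up to $n$ is harmless because $A_{j,\ell}=0$ for $j>k$. Exchanging the sums over $j$ and $m$, the coefficient of $(-1)^mW_\ell(\mC\otimes_{\F_q}\F_{q^m})$ is exactly
\[
\sum_{j=m}^n\Bigl(\prod_{\nu=0}^{j-1}\frac{1}{q^j-q^\nu}\Bigr)q^{\binom j2+\binom{j-m}2}\qbinom{j}{m}_q=\gamma(q,m,n).
\]
Plugging the resulting expression for $\alpha(\mC,r)$ into Corollary~\ref{E[C]_general_enhanced} gives the theorem.
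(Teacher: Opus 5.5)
Your proof is correct and arrives at exactly the paper's formula for $\alpha(\mC,r)$, but by a different route. The paper fixes $r$ and double counts pairs $(M,S)$ with $M\in\F_q^{m\times n}[\mC]$ and $\sigma(\operatorname{rowsp}(M))\subseteq S$. This gives the moment identities $\sum_j q^{jm}\widehat\beta_j(\mC,r)=\sum_\ell\binom{n-\ell}{r-\ell}W_\ell(\mC\otimes_{\F_q}\F_{q^m})$, where $\widehat\beta_j(\mC,r)$ counts the $r$-sets $S$ with $\dim\mC(S)=j$. The paper then expands $q^{jm}$ by counting matrices by rank, inverts twice (Lemmas~\ref{first_inversion} and~\ref{second_inversion}) to recover every $\widehat\beta_j(\mC,r)$, and finally uses $\widehat\beta_0(\mC,r)=\alpha(\mC,n-r)$.

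You instead apply the M\"obius function of the subspace lattice at the start, which rewrites $\alpha(\mC,r)$ in terms of the generalized weight distribution $A_{j,\ell}$. You then prove the per-support identity $W_\ell(\mC\otimes_{\F_q}\F_{q^m})=\sum_j A_{j,\ell}\prod_{\nu<j}(q^m-q^\nu)$ by sorting extension codewords according to the subcode spanned by their $\F_q$-components. This is the known link between extended and generalized weight enumerators. After that you need only one $q$-binomial inversion.

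The two arguments pass through the same numbers in a different order. The paper's intermediate quantity $\sum_j\qbinom{j}{i}_q\widehat\beta_j(\mC,r)$ counts pairs $(S,U)$ with $U\le\mC(S)$ and $\dim U=i$, so it equals $\sum_\ell\binom{n-\ell}{r-\ell}A_{i,\ell}$. Likewise, Lemma~\ref{second_inversion} read at index $0$ is precisely your M\"obius identity summed over $S$.

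Your version is shorter: one inversion, and no need for the identity expanding $q^{jm}$ by rank. It also gives a cleaner intermediate statement: $\E[\mC]$ is determined by the numbers of $j$-dimensional subcodes of each support size, and extension codes enter only through a triangular change of basis. The paper's version, in exchange, computes the full distribution of $\dim\mC(S)$ over all $r$-sets, not just the number of information sets.
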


We find the previous result interesting in its own right. However, in Section~\ref{reedmuller_section} we will also show how it can be used to compute $\E[\mC]$, where $\mC$ is any first-order $q$-ary Reed-Muller code.

The proof of Theorem~\ref{expectation_weight_distr} relies on four lemmas that we establish separately.
Throughout the remainder of this section, for any $m \ge 1$ we fix a basis $\Omega_m=\{\omega^1_m, \ldots, \omega^m_m\}$ of $\F_{q^m}$ over~$\F_q$ and denote by $\varphi_m:\F_{q^m} \to \F_q^m$ the expansion map over the basis $\Omega_m$, which is an $\F_q$-isomorphism.

For an $\F_q$-subspace $V \subseteq \F_q^n$, we let
$\F_q^{m \times n}[V]$ denote the $\F_q$-space of $m \times n$ matrices whose rows are vectors from $V$. When $S \subseteq \{1, \ldots, n\}$, we let
$\F_q^{m \times n}[S]$ be the set of matrices whose columns indexed by any $i \notin S$ are identically zero.

The map $\varphi_m$ induces a map $\widehat{\varphi}_m: \mC \otimes_{\F_q} \F_{q^m} \to \F_q^{m \times n}[\mC]$ as follows: For $x \in \mC \otimes_{\F_q} \F_{q^m}$, write $x=(x_1, \ldots, x_n)$ into standard coordinates, and construct the matrix $M_m(x)$ whose columns are the vectors \smash{$\varphi_m(x_1)^\top, \ldots, \varphi_m(x_n)^\top$}, where the superscript ``$\top$'' denotes transposition. It can be checked that $M_m(x) \in \F_q^{m \times n}[\mC]$. We define $\widehat{\varphi}_m : x \mapsto M_m(x)$.

\begin{lemma}[\text{see \cite[Proposition~6]{ext_weight_jurrius}}] \label{isomorphism_extension_codes}
Let $x \in \mC \otimes_{\F_q} \F_{q^m}$ and $S \subseteq \{1, \ldots, n\}$. The following are equivalent:
\begin{itemize}
    \item[1.] $\sigma(x) \subseteq S$,
    \item[2.] $\widehat{\varphi}_m(x) \in \F_q^{m \times n}[S]$.
\end{itemize}
\end{lemma}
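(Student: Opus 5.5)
The plan is to prove the two implications coordinatewise, using only two facts: $\varphi_m$ is an $\F_q$-isomorphism, and $\widehat{\varphi}_m(x)=M_m(x)$ is built column by column from the coordinates of $x$.

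The key observation is that the $i$-th column of $M_m(x)$ is $\varphi_m(x_i)^\top$. Since $\varphi_m$ is injective and $\F_q$-linear, we have $\varphi_m(x_i)=0$ if and only if $x_i=0$. Hence the $i$-th column of $\widehat{\varphi}_m(x)$ is identically zero exactly when $i\notin\sigma(x)$. In other words, the set of indices of nonzero columns of $\widehat{\varphi}_m(x)$ equals $\sigma(x)$.

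Both implications then follow by unwinding the definitions. First assume $\sigma(x)\subseteq S$. For any $i\notin S$ we then have $x_i=0$, so the $i$-th column of $\widehat{\varphi}_m(x)$ vanishes. This means $\widehat{\varphi}_m(x)\in\F_q^{m\times n}[S]$; membership in $\F_q^{m\times n}[\mC]$ is already guaranteed by the construction of $\widehat{\varphi}_m$. Conversely, assume all columns of $\widehat{\varphi}_m(x)$ indexed by $i\notin S$ are zero. Then $\varphi_m(x_i)=0$, hence $x_i=0$ for every $i\notin S$, which gives $\sigma(x)\subseteq S$.

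No step here is a real obstacle: the statement reduces to injectivity of the expansion map applied one coordinate at a time. The only point needing care is bookkeeping. One should make clear that $\F_q^{m\times n}[S]$ constrains only the columns outside $S$, and that the rows of $\widehat{\varphi}_m(x)$ lie in $\mC$ independently of $S$, so the equivalence concerns supports alone.
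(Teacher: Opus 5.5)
Your proof is correct. It is the standard argument behind the cited result \cite[Proposition~6]{ext_weight_jurrius}, which the paper invokes without reproving: the $i$-th column of $\widehat{\varphi}_m(x)$ is $\varphi_m(x_i)^\top$, and it vanishes exactly when $x_i=0$ because $\varphi_m$ is injective.
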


The next result recalls a well-known combinatorial identity.  It can be obtained by enumerating the $j \times r$ matrices with entries in $\F_q$ by their rank.
 
\begin{lemma}[\text{see \cite[Lemma~26]{power_of_q_inversion}}] \label{q^jm_inversion} 
    For all $j,r \in \N$ we have
    \[
    q^{jr} = \sum_{i=0}^r \qbinom{j}{i}_q \qbinom{r}{i}_q \prod_{s=0}^{i-1} (q^i - q^s).
    \]
\end{lemma}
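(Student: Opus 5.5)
We prove the identity by counting the set $\F_q^{j \times r}$ of all $j \times r$ matrices over $\F_q$ in two ways. Directly, there are $q^{jr}$ such matrices. Alternatively, we partition them by rank $i$, where $0 \le i \le \min\{j,r\}$. Terms with $i > \min\{j,r\}$ contribute nothing, because one of the two Gaussian binomials then vanishes. So the sum may be taken up to $r$ as in the statement.

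Fix $i$. We count the rank-$i$ matrices $M$ through a factorization determined by the column space. The column space $U = \mathrm{colsp}(M) \subseteq \F_q^j$ is an $i$-dimensional subspace, and there are $\qbinom{j}{i}_q$ choices for it. Given $U$, a matrix with column space exactly $U$ is the same as a surjective $\F_q$-linear map $\F_q^r \to U$. After fixing a basis of $U$, this is the same as an $i \times r$ matrix of full row rank $i$.

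We count the $i \times r$ matrices of full row rank in the form the statement requires. Such a matrix is determined by its row space $R \subseteq \F_q^r$, an $i$-dimensional subspace with $\qbinom{r}{i}_q$ choices, together with an ordered basis of $R$. Equivalently, it is determined by $R$ together with an element of $\mathrm{GL}_i(\F_q)$ expressing the rows in a fixed basis of $R$. Since $|\mathrm{GL}_i(\F_q)| = \prod_{s=0}^{i-1}(q^i - q^s)$, the number of such matrices is $\qbinom{r}{i}_q \prod_{s=0}^{i-1}(q^i-q^s)$. One can also check directly that this equals $\prod_{s=0}^{i-1}(q^r - q^s)$, which is the number of ways to choose $i$ linearly independent rows one at a time.

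Multiplying the two counts shows that the number of rank-$i$ matrices is $\qbinom{j}{i}_q \qbinom{r}{i}_q \prod_{s=0}^{i-1}(q^i - q^s)$. Summing over $i$ and equating with $q^{jr}$ gives the identity.

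The only delicate point is the bijectivity of the correspondence between $M$ and the pair ($U$, surjection onto $U$). It is injective once a basis of each $U$ is fixed in advance: $M$ is recovered as the basis matrix of $U$ times the $i \times r$ coefficient matrix. It is surjective because every such product has column space exactly $U$, since the coefficient matrix has full row rank. No other step should pose difficulty.
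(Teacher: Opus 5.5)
Your proposal is correct and takes the same approach the paper indicates: count the matrices in $\F_q^{j\times r}$ by rank, with the rank-$i$ ones numbering $\qbinom{j}{i}_q$ choices of column space times $\qbinom{r}{i}_q\prod_{s=0}^{i-1}(q^i-q^s)$ full-row-rank coefficient matrices. The paper only sketches this argument and cites the reference, so your write-up supplies the omitted details.
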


The last two lemmas establish \textit{ad-hoc} inversion formulas that we will need in the proof of the main result.

\begin{lemma} \label{first_inversion}
Let $\{x_i\}_{i=0}^\infty$ be an integer sequence.
    Let
    \[
    y_m = \sum_{i=0}^m \qbinom{m}{i}_q x_i \quad \textnormal{for all $m \ge 0$}.
    \]
    Then
    \[
    x_i = \sum_{m=0}^i (-1)^{i-m} q^{\binom{i-m}{2}} \qbinom{i}{m}_q y_m \quad \textnormal{ for all $i \ge 0$}.
    \]
\end{lemma}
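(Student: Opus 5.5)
This is the Gaussian binomial inversion. I would prove it by substituting the definition of $y_m$ into the right-hand side and checking that the resulting kernel is the Kronecker delta. Concretely, for fixed $i$ the right-hand side becomes
\[
\sum_{m=0}^i (-1)^{i-m} q^{\binom{i-m}{2}} \qbinom{i}{m}_q \sum_{t=0}^m \qbinom{m}{t}_q x_t
= \sum_{t=0}^i x_t \sum_{m=t}^i (-1)^{i-m} q^{\binom{i-m}{2}} \qbinom{i}{m}_q \qbinom{m}{t}_q ,
\]
after exchanging the two finite sums. It then suffices to show that the inner sum equals $1$ if $t=i$ and $0$ if $t<i$.

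For the inner sum I would use the standard identity $\qbinom{i}{m}_q\qbinom{m}{t}_q = \qbinom{i}{t}_q\qbinom{i-t}{m-t}_q$, which follows from writing both sides as ratios of $q$-factorials (or by counting flags $U_t\subseteq U_m\subseteq \F_q^i$ in two ways). Setting $N=i-t$ and $j=i-m$, and using $\qbinom{N}{N-j}_q=\qbinom{N}{j}_q$, the inner sum becomes
\[
\qbinom{i}{t}_q \sum_{j=0}^{N} (-1)^{j} q^{\binom{j}{2}} \qbinom{N}{j}_q .
\]

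The remaining step is the $q$-binomial theorem (Cauchy's form):
\[
\prod_{s=0}^{N-1}(1+q^s z)=\sum_{j=0}^N q^{\binom{j}{2}}\qbinom{N}{j}_q z^j .
\]
Evaluating at $z=-1$ gives $\prod_{s=0}^{N-1}(1-q^s)$. This vanishes for $N\ge 1$ because of the factor $s=0$, and equals $1$ for $N=0$. Hence the inner sum is $\delta_{t,i}$, and the right-hand side collapses to $x_i$.

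None of this is a real obstacle. The only points needing care are the index bookkeeping in the reindexing and the exponent $\binom{i-m}{2}$: the substitution $j=i-m$ is chosen precisely so that this exponent becomes the $q^{\binom{j}{2}}$ of Cauchy's identity. If one prefers a self-contained route, the $q$-binomial theorem can itself be proved by induction on $N$ using the $q$-Pascal rule. The integrality of the $x_i$ plays no role in the argument.
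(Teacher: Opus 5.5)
Your proof is correct and follows essentially the same route as the paper: substitute $y_m$, exchange the two finite sums, use $\qbinom{i}{m}_q\qbinom{m}{t}_q=\qbinom{i}{t}_q\qbinom{i-t}{i-m}_q$, reindex, and show the inner alternating sum vanishes via the $q$-binomial theorem. Your evaluation of Cauchy's identity at $z=-1$ spells out the step the paper cites in a single line.
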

\begin{proof}
    Fix $i \ge 0$. We have
    \allowdisplaybreaks
    \[
    \begin{split}
    \sum_{m=0}^i (-1)^{i-m} q^{\binom{i-m}{2}} \qbinom{i}{m}_q y_m &= \sum_{m=0}^i (-1)^{i-m} q^{\binom{i-m}{2}} \qbinom{i}{m}_q \sum_{j=0}^m \qbinom{m}{j}_q x_j \\
    &= \sum_{j=0}^i x_j \sum_{m=j}^i (-1)^{i-m} q^{\binom{i-m}{2}} \qbinom{i}{m}_q  \qbinom{m}{j}_q \\
    &= \sum_{j=0}^i x_j \sum_{m=j}^i (-1)^{i-m} q^{\binom{i-m}{2}} \qbinom{i}{j}_q \qbinom{i-j}{i-m}_q \\
    &= \sum_{j=0}^i \qbinom{i}{j}_q x_j \sum_{m=j}^i (-1)^{i-m} q^{\binom{i-m}{2}}  \qbinom{i-j}{i-m}_q \\
    &= \sum_{j=0}^i \qbinom{i}{j}_q x_j \sum_{m=0}^{i-j} (-1)^m q^{\binom{m}{2}}  \qbinom{i-j}{m}_q =  x_i.
    \end{split}
    \]
    The latter identity holds because
    \[
    \sum_{m=0}^{i-j} (-1)^m q^{\binom{m}{2}}  \qbinom{i-j}{m}_q = 0
    \]
    unless $i = j$ (in which case the value is 1), by the $q$-Binomial Theorem.
\end{proof}

The proof of the next inversion formula is analogous to the previous one and it is therefore omitted.

\begin{lemma} \label{second_inversion}
    Let $n \ge 1$  and let 
    $x_0, x_1, \ldots, x_n$ be integers.
    Let
    \[
    y_i = \sum_{j=i}^n \qbinom{j}{i}_q x_j \quad \textnormal{for $0 \le i \le n$}.
    \]
    Then
    \[
    x_j = \sum_{i=j}^n (-1)^{i-j} q^{\binom{i-j}{2}} \qbinom{i}{j}_q y_i \quad \textnormal{for all $0 \le j \le n$.}
    \]
\end{lemma}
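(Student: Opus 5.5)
The statement to prove is Lemma~\ref{second_inversion}. I would mirror the proof of Lemma~\ref{first_inversion}: substitute the definition of $y_i$ into the proposed right-hand side, exchange the order of summation, and collapse the inner sum with the $q$-Binomial Theorem.

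Fix $0 \le j \le n$. The first step is to write
\[
\sum_{i=j}^n (-1)^{i-j} q^{\binom{i-j}{2}} \qbinom{i}{j}_q y_i = \sum_{i=j}^n (-1)^{i-j} q^{\binom{i-j}{2}} \qbinom{i}{j}_q \sum_{t=i}^n \qbinom{t}{i}_q x_t ,
\]
and then exchange the summations. The index pairs satisfy $j \le i \le t \le n$, so the expression becomes
\[
\sum_{t=j}^n x_t \sum_{i=j}^t (-1)^{i-j} q^{\binom{i-j}{2}} \qbinom{t}{i}_q \qbinom{i}{j}_q .
\]

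Next I would apply the $q$-analogue of the subset-of-a-subset identity,
\[
\qbinom{t}{i}_q\qbinom{i}{j}_q = \qbinom{t}{j}_q \qbinom{t-j}{i-j}_q ,
\]
which is the same identity used in the proof of Lemma~\ref{first_inversion}. This lets me factor $\qbinom{t}{j}_q$ out of the inner sum. Reindexing with $m = i-j$, the inner sum becomes
\[
\sum_{m=0}^{t-j} (-1)^m q^{\binom{m}{2}} \qbinom{t-j}{m}_q .
\]
By the $q$-Binomial Theorem, this sum is $\prod_{s=0}^{t-j-1}(1-q^s)$. It therefore vanishes when $t>j$, because of the factor $1-q^0=0$, and equals $1$ when $t=j$.

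Only the term $t=j$ survives, and since $\qbinom{j}{j}_q=1$, the whole expression equals $x_j$, as required. No step here is a real obstacle; the only care needed is to track the summation ranges correctly when exchanging the sums, since the triangular structure is now upper rather than lower.
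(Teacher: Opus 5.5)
Your proof is correct and takes the same route as the paper, which omits this proof as analogous to Lemma~\ref{first_inversion}. You carry out exactly that argument: substitute, exchange the sums over the upper-triangular range $j \le i \le t \le n$, factor out $\qbinom{t}{j}_q$, and collapse the inner sum with the $q$-Binomial Theorem.
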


\begin{proof}[Proof of Theorem \ref{expectation_weight_distr}]
Throughout the proof, for $0 \le j \le k$ and $0 \le r \le n$, we let
\[
\widehat{\beta}_j(\mC,r) = |\{ S \subseteq \{1, \dots, n\} \, : \, |S| = r, \, \dim(\mC(S)) = j \}| .
\]
We double count the cardinality of the set 
\[
\mathcal{A}_m=\{ (M,S) \, : \, M \in \F_q^{m \times n} [\mC] , \, S \subseteq \{1, \dots, n\} ,\,  |S| = r, \, \sigma(\operatorname{rowsp}(M)) \subseteq S \}
\] 
for $m \ge 0$, where $\F_q^{0 \times n}=\{\emptyset\}$ and $\operatorname{rowsp}(\emptyset)=\{0\}$ by convention. On the one hand,
\[
\begin{split}
    |\mathcal{A}_m| &= \sum_{\substack{S \subseteq \{1, \dots, n\} \\ |S| = r}} |\{ M \in \F_q^{m \times n} [\mC] \, : \, \sigma(\operatorname{rowsp}(M)) \subseteq S \}| \\
    &= \sum_{\substack{S \subseteq \{1, \dots, n\} \\ |S| = r}} |\mC(S)|^m = \sum_{j=0}^n \sum_{\substack{S \subseteq \{1, \dots, n\} \\ |S| = r \\ \dim(\mC(S)) = j}} q^{jm} = \sum_{j=0}^n q^{jm} \widehat{\beta}_j(\mC,r).
\end{split}
\]
On the other hand,
\[
\begin{split}
    |\mathcal{A}_m|   &= \sum_{M \in \F_q^{m \times n} [\mC]} |\{ S \subseteq \{1, \dots, n\}\, : \,   |S| = r, \, \sigma(\operatorname{rowsp}(M)) \subseteq S \}| \\
    &= \sum_{x \in \mC \otimes_{\F_q} \F_{q^m}} |\{ S \subseteq \{1, \dots, n\}\, : \,   |S| = r, \, \sigma(x) \subseteq S \}| \\
    &= \sum_{\ell=0}^n \sum_{\substack{x \in \mC \otimes_{\F_q} \F_{q^m} \\ \omega^H(x) = \ell}} \binom{n-\ell}{r - \ell} = \sum_{\ell=0}^n \binom{n-\ell}{r - \ell}  W_{\ell}(\mC \otimes_{\F_q}  \F_{q^m}),
\end{split}
\]
where the second-to-last equality follows from Lemma \ref{isomorphism_extension_codes}.
Therefore
\begin{equation} \label{first_equality_weight_distr}
    \sum_{j=0}^n q^{jm} \widehat{\beta}_j(\mC,r) = \sum_{\ell=0}^n \binom{n-\ell}{r - \ell}  W_{\ell}(\mC \otimes_{\F_q} \F_{q^m})
\end{equation}
for all $m \ge 0$. 
Combining \eqref{first_equality_weight_distr} with Lemma \ref{q^jm_inversion}  gives
\[
    \sum_{i=0}^m \qbinom{m}{i}_q \Bigg(\prod_{s=0}^{i-1} (q^i - q^s) \Bigg) \sum_{j=0}^n  \qbinom{j}{i}_q    \widehat{\beta}_j(\mC,r) = \sum_{\ell=0}^n \binom{n-\ell}{r - \ell}  W_{\ell}(\mC \otimes_{\F_q} \F_{q^m}).
\]
Notice that the previous identity is of the form
\[
\sum_{i=0}^m \qbinom{m}{i}_q x_i = y_m,
\]
which allows us to apply Lemma \ref{first_inversion} and obtain
\[
\sum_{j=m}^n  \qbinom{j}{m}_q    \widehat{\beta}_j(\mC,r) = \Bigg(\prod_{s=0}^{m-1} \frac{1}{q^m - q^s}\Bigg)\sum_{i=0}^m (-1)^{m-i} q^{\binom{m-i}{2}} \qbinom{m}{i}_q \sum_{\ell=0}^n \binom{n-\ell}{r - \ell}  W_{\ell}(\mC \otimes_{\F_q} \F_{q^i})
\]
for all $m \ge 0$. Therefore we can apply Lemma \ref{second_inversion} and get
\[
\begin{split}
    \widehat{\beta}_m&(\mC,r) = \\ &\sum_{j=m}^n (-1)^{j-m} q^{\binom{j-m}{2}} \qbinom{j}{m}_q \Bigg( \prod_{s=0}^{j-1} \frac{1}{q^j - q^s} \Bigg) \sum_{i=0}^j (-1)^{j-i} q^{\binom{j-i}{2}} \qbinom{j}{i}_q \sum_{\ell=0}^n \binom{n-\ell}{r - \ell}  W_{\ell}(\mC \otimes_{\F_q} \F_{q^i})
\end{split}
\]
for all $m \ge 0$.

To conclude, since $ \widehat{\beta}_0(\mC,r) = \alpha(\mC,n-r)$, we get 
\[
\begin{split}
    \alpha(\mC,n-r) &= \sum_{j=0}^n (-1)^{j} q^{\binom{j}{2}} \Bigg( \prod_{s=0}^{j-1} \frac{1}{q^j - q^s} \Bigg) \sum_{i=0}^j (-1)^{j-i} q^{\binom{j-i}{2}} \qbinom{j}{i}_q \sum_{\ell=0}^n \binom{n-\ell}{r - \ell}  W_{\ell}(\mC \otimes_{\F_q} \F_{q^i}) \\
    &= \sum_{\ell=0}^n \binom{n-\ell}{r - \ell} \sum_{i=0}^n (-1)^i W_{\ell}(\mC \otimes_{\F_q} \F_{q^i}) \sum_{j=i}^n \Bigg( \prod_{s=0}^{j-1} \frac{1}{q^j - q^s} \Bigg)  q^{\binom{j}{2} + \binom{j-i}{2}} \qbinom{j}{i}_q,
\end{split}
\]
or, equivalently,
\[
    \alpha(\mC,r)= \sum_{\ell=0}^{n-r}  \binom{n-\ell}{r}  \sum_{i=0}^n (-1)^i W_{\ell}(\mC \otimes_{\F_q} \F_{q^i}) \sum_{j=i}^n \Bigg( \prod_{s=0}^{j-1} \frac{1}{q^j - q^s} \Bigg) q^{\binom{j}{2} + \binom{j-i}{2}} \qbinom{j}{i}_q.
\]
Combining the latter with the general formula in Corollary \ref{E[C]_general_enhanced} we get the desired result.
\end{proof}

    Note that Theorem \ref{expectation_weight_distr} implies that, for any code $\mC \subseteq \F_q^n$, knowing all weight distributions of the extension codes $\mC \otimes_{\F_q} \F_{q^i}$, for $0 \le i \le n$, is sufficient to determine the expectation~$\E[\mC]$.

\section{Performance of First-Order Reed-Muller Codes} \label{reedmuller_section}
In this section we show a direct application of Theorem~\ref{expectation_weight_distr}, establishing a closed formula for 
$\E[\mC]$ where $\mC$ is any 
first-order $q$-ary Reed-Muller code.
Recall that, for an integer $s \ge 2$, the first-order $q$-ary Reed-Muller code $\mC \subseteq \F_q^n$ has length $n = q^{s-1}$, dimension $k=s$, and minimum distance $d = (q-1)q^{s-2}$; see~\cite[Definition~15]{jurrius2}.
We start by introducing the extended weight enumerator, which we will need in the proof of the main result.

\begin{definition}
    Let $\mC \subseteq \F_q^n$ be a code and $J \subseteq \{1, \dots, n\}$. Let
    \[
    B_J(U) = U^{\dim(\mC(J^{\mathsf{c}}))}-1, \quad  \quad B_t(U) = \sum_{\substack{J \subseteq \{1, \dots, n\} \\ |J| =t}} B_J(U).
    \]
    The \emph{extended weight enumerator} of $\mC$ is
    \[
    W_{\mC}(X,Y,U) = X^n + \sum_{t=0}^n B_t(U) (X-Y)^t Y^{n-t}.
    \]
\end{definition}

It turns out that
the extended weight enumerator
of first-order $q$-ary Reed-Muller codes 
is known and has been explicitly computed as follows.

\begin{lemma}[\text{see \cite[Theorem~18]{jurrius2}}]  
Let $\mC \subseteq \F_q^n$ be the first-order $q$-ary Reed-Muller code of dimension $s$, where $n = q^{s-1}$.
    Its extended weight enumerator
    is equal to
    \begin{equation} \label{ext_enum_RM}
        \begin{split}
        W_{\mC}(X,Y,U) = &\sum_{t=1}^s  \Bigg( \prod_{j=0}^{t-1} (U - q^j) \Bigg) \qbinom{s-1}{t-1}_q Y^{q^{s-1}} \\
        &+ \sum_{t=0}^{s-1} \Bigg( \prod_{j=0}^{t-1} (U - q^j) \Bigg) q^t \qbinom{s-1}{t}_q X^{q^{s-1-t}} Y^{q^{s-1}-q^{s-1-t}}. 
    \end{split}
    \end{equation}
\end{lemma}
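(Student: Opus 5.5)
The plan is to reduce the identity \eqref{ext_enum_RM} to a computation of the weight distributions of the extension codes $\mC \otimes_{\F_q} \F_{q^m}$, and then to carry out that computation using the geometry of affine maps. Both sides of \eqref{ext_enum_RM} are polynomials in $U$ of degree at most $s$, with coefficients in $\mathbb{Z}[X,Y]$. It therefore suffices to prove equality after substituting $U=q^m$ for every integer $m \ge 1$, since these are infinitely many distinct values.

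\emph{Step 1: specialize $U=q^m$.} I first show that $W_{\mC}(X,Y,q^m)=W_{\mC\otimes_{\F_q}\F_{q^m}}(X,Y)$. By Lemma~\ref{isomorphism_extension_codes}, $\widehat{\varphi}_m$ identifies the codewords of $\mC\otimes_{\F_q}\F_{q^m}$ with support in $J^{\mathsf c}$ with the matrices in $\F_q^{m\times n}[\mC(J^{\mathsf c})]$. Hence their number is $q^{m\dim(\mC(J^{\mathsf c}))}$, so $B_J(q^m)+1$ equals this count. The constant terms $-1$ contribute
\[
-\sum_t \binom{n}{t}(X-Y)^tY^{n-t}=-X^n ,
\]
which cancels the leading $X^n$. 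Writing $\mC_m=\mC\otimes_{\F_q}\F_{q^m}$, what remains is
\[
\sum_{J}|\mC_m(J^{\mathsf c})|\,(X-Y)^{|J|}Y^{n-|J|}=\sum_{x\in \mC_m}\sum_{J\subseteq \sigma(x)^{\mathsf c}}(X-Y)^{|J|}Y^{n-|J|}.
\]
By the binomial theorem this equals $\sum_x X^{n-\omega^H(x)}Y^{\omega^H(x)}$, which is the claimed specialization.

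\emph{Step 2: weight distribution of the extension.} The first-order Reed--Muller code consists of the evaluations of affine functions $\F_q^{s-1}\to\F_q$ at all $n=q^{s-1}$ points. Consequently $\mC_m$ corresponds, via $\varphi_m$, to affine maps $f(v)=Av+b$ with $A\in\F_q^{m\times(s-1)}$ and $b\in\F_q^m$. This correspondence is a bijection, because $\mC$ has dimension $s$ and the evaluation map is injective. The weight of such a codeword is $n$ minus the size of the zero set of $f$.

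I then stratify by $t=\operatorname{rank}(A)$, for $0\le t\le s-1$. There are $\qbinom{s-1}{t}_q\prod_{j=0}^{t-1}(q^m-q^j)$ such matrices $A$. If $b\in -\operatorname{im}(A)$, which happens for $q^t$ choices of $b$, the zero set is an affine subspace of size $q^{s-1-t}$. This gives weight $q^{s-1}-q^{s-1-t}$ and produces the second sum in \eqref{ext_enum_RM}. For the remaining $q^m-q^t$ choices of $b$, the zero set is empty and the weight is $q^{s-1}$. These contribute
\[
\sum_{t=0}^{s-1}\qbinom{s-1}{t}_q\prod_{j=0}^{t}(q^m-q^j)\,Y^{q^{s-1}},
\]
and reindexing $t\mapsto t-1$ gives exactly the first sum in \eqref{ext_enum_RM} at $U=q^m$.

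\emph{Main obstacle.} The delicate point is Step~1, namely making sure the $-1$ normalization in $B_J$ and the $X^n$ term cancel correctly, including the convention $J=\emptyset$. The counting in Step~2 is routine once the affine-map description of $\mC_m$ is justified. That justification amounts to checking that $\widehat{\varphi}_m$ sends $\mC_m$ onto the $m$-tuples of codewords of $\mC$, which follows from $\F_q^{m\times n}[\mC]$ having cardinality $q^{ms}=|\mC_m|$.
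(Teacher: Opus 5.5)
Your proof is correct, and it takes a different route from the paper. The paper gives no argument of its own for this lemma; it imports the formula from \cite{jurrius2}. There the formula comes from finite geometry and generalized weight enumerators:
\begin{itemize}
\item $\mC$ is viewed as the projective system of all points of the affine space $\F_q^{s-1}$.
\item A $t$-dimensional subcode meets this affine space either in nothing or in a flat with $q^{s-1-t}$ points. This gives $\qbinom{s-1}{t-1}_q$ subcodes of full support and $q^t\qbinom{s-1}{t}_q$ subcodes of support size $q^{s-1}-q^{s-1-t}$.
\item One passes to $W_{\mC}(X,Y,U)$ by weighting each $t$-dimensional subcode with $\prod_{j=0}^{t-1}(U-q^j)$, which explains the shape of \eqref{ext_enum_RM}.
\end{itemize}

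Your route is different in three respects. You reprove the specialization \eqref{ext_weight_enum_relation} directly from Lemma~\ref{isomorphism_extension_codes}. You count weights in $\mC\otimes_{\F_q}\F_{q^m}$ via affine maps $v\mapsto Av+b$. You then interpolate in $U$.

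Underneath, the two are the same count organized differently. Codewords with $\operatorname{rank}(A)=t$ and $b\in\operatorname{im}(A)$ are exactly those whose rows of $\widehat{\varphi}_m(x)$ span a $t$-dimensional subcode of support size $q^{s-1}-q^{s-1-t}$. Those with $b\notin\operatorname{im}(A)$ span a $(t+1)$-dimensional subcode of full support. The factor $\prod_{j}(q^m-q^j)$ is just the number of $m$-row matrices with a prescribed row space.

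\textbf{What your route buys.} It is self-contained and uses only tools already in the paper. It also shows that your Step~2 alone yields \eqref{extension_RM_coeff}, which is all Theorem~\ref{exp_RM} actually uses. So the extended weight enumerator and the interpolation step could be bypassed there entirely.

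\textbf{What the geometric route buys.} It gives more information, namely the generalized weight enumerators and the supports of all subcodes. It also treats the simplex code by the same method.
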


We now have all the ingredients to solve the DNA coverage depth problem for first-order $q$-ary Reed-Muller codes.
\begin{theorem} \label{exp_RM}
    Let $\mC \subseteq \F_q^n$ be the first-order $q$-ary Reed-Muller code of dimension $s$,
    where $n = q^{s-1}$ and $d=(q-1)q^{s-2}$. We have
    \[
    \begin{split}
    \E[\mC] &= n(H_n - H_{d-1}) - \sum_{r=s}^{q^{s-2}} \frac{1}{\binom{q^{s-1}-1}{r}} \sum_{t=0}^{\lfloor s-1-\log_q(r) \rfloor} \binom{q^{s-1-t}}{r} \\
    & \qquad \cdot \sum_{i=0}^{q^{s-1}} (-1)^i  \Bigg( \prod_{j=0}^{t-1} (q^i - q^j) \Bigg) q^t \qbinom{s-1}{t}_q \gamma(q,i,q^{s-1}),
    \end{split}
    \]
where 
\[
\gamma(q,i,q^{s-1}) = \sum_{j=i}^{q^{s-1}} \Bigg( \prod_{\nu=0}^{j-1} \frac{1}{q^j - q^\nu} \Bigg) q^{\binom{j}{2} + \binom{j-i}{2}} \qbinom{j}{i}_q.
\]
\end{theorem}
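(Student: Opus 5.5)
The plan is to specialize Theorem~\ref{expectation_weight_distr} to $n=q^{s-1}$, $k=s$, $d=(q-1)q^{s-2}$. The outer range then becomes $r=s,\dots,n-d=q^{s-2}$, and the prefactor is $1/\binom{q^{s-1}-1}{r}$. It remains to identify the weight distributions $W_\ell(\mC\otimes_{\F_q}\F_{q^i})$ for $0\le i\le n$ and substitute them.

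The key input is the standard fact (due to Jurrius and Pellikaan, and consistent with the definition of $B_J(U)=U^{\dim\mC(J^{\mathsf c})}-1$) that evaluating the extended weight enumerator at $U=q^i$ gives the homogeneous weight enumerator of the extension code:
\[
W_{\mC}(X,Y,q^i)=W_{\mC\otimes_{\F_q}\F_{q^i}}(X,Y).
\]
If needed, I will justify this directly. By Lemma~\ref{isomorphism_extension_codes}, the number of codewords of $\mC\otimes\F_{q^i}$ supported in a set $S$ equals $|\mC(S)|^i=(q^i)^{\dim\mC(S)}$. Inclusion–exclusion over supports then produces exactly the $B_t(q^i)(X-Y)^tY^{n-t}$ expansion. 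The case $i=0$ is consistent with the zero-code convention, since every product $\prod_{j=0}^{t-1}(1-q^j)$ vanishes for $t\ge1$.

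Next I read off the weights from \eqref{ext_enum_RM} at $U=q^i$. The first sum is a multiple of $Y^{q^{s-1}}$, so it contributes only to weight $\ell=n$. The $t$-th term of the second sum contributes to weight $\ell=q^{s-1}-q^{s-1-t}$, with coefficient
\[
\Bigg(\prod_{j=0}^{t-1}(q^i-q^j)\Bigg)q^t\qbinom{s-1}{t}_q .
\]
The term $t=0$ has coefficient $1$ and weight $0$, as it should. I then plug these into the inner sum $\sum_{\ell=0}^{n-r}\binom{n-\ell}{r}W_\ell(\cdots)$. The weight-$n$ part is excluded because $\ell\le n-r<n$ (since $r\ge s\ge 2$). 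The weight $\ell=q^{s-1}-q^{s-1-t}$ gives $\binom{n-\ell}{r}=\binom{q^{s-1-t}}{r}$. The constraint $\ell\le n-r$ is equivalent to $q^{s-1-t}\ge r$, that is, $t\le s-1-\log_q r$, which yields the upper limit $\lfloor s-1-\log_q(r)\rfloor$. The binomial coefficient also vanishes beyond that limit, so either reading works. After exchanging the order of the $\ell$-sum and the $i$-sum, the expression matches the stated formula with $\gamma(q,i,q^{s-1})$ as in Theorem~\ref{expectation_weight_distr}.

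I expect the main obstacle to be bookkeeping rather than anything conceptual. The points to check are that the distinct $t$ give distinct weights (true, since $q^{s-1-t}$ is strictly decreasing in $t$), so that no weights coincide and need merging, and that the specialization identity holds for all $i$ including $i=0$. I would verify the latter via the support-counting argument described above, which is the only nontrivial step.
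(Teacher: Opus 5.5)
Your proposal is correct and follows essentially the same route as the paper. Like the paper, you specialize Theorem~\ref{expectation_weight_distr} via $W_{\mC}(X,Y,q^m)=W_{\mC\otimes_{\F_q}\F_{q^m}}(X,Y)$, read off the coefficients at weights $q^{s-1}-q^{s-1-t}$ from \eqref{ext_enum_RM}, drop the weight-$n$ part, and convert $\ell\le n-r$ into $t\le\lfloor s-1-\log_q(r)\rfloor$. The only difference is that you sketch a self-contained justification of that specialization identity via Lemma~\ref{isomorphism_extension_codes}, including the $m=0$ zero-code convention, whereas the paper simply cites Jurrius--Pellikaan for it.
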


\begin{proof}
We start by 
connecting the 
extended weight enumerator 
with extension codes.
More precisely, we have that
the extended weight enumerator of a code $\mC$ for $U = q^m$ coincides with the weight enumerator of the extension code $\mC \otimes_{\F_q} \F_{q^m}$:
   \begin{equation} \label{ext_weight_enum_relation}
        W_{\mC}(X,Y,q^m) = W_{\mC \otimes_{\F_q} \F_{q^m}}(X,Y) \quad \mbox{for all $m \ge 0$;}
    \end{equation}
see~\cite[Proposition 5]{ext_weight_jurrius}.
Therefore, 
combining \eqref{ext_weight_enum_relation} with \eqref{ext_enum_RM} we obtain the weight enumerator of the extension code $\mC \otimes_{\F_q} \F_{q^m}$ for a first-order $q$-ary Reed-Muller code $\mC$ of dimension $s$:
\[
\begin{split}
W_{\mC \otimes_{\F_q} \F_{q^m}}(X,Y) = &\sum_{t=1}^s  \Bigg( \prod_{j=0}^{t-1} (q^m - q^j) \Bigg) \qbinom{s-1}{t-1}_q Y^{q^{s-1}} \\
        &+ \sum_{t=0}^{s-1} \Bigg( \prod_{j=0}^{t-1} (q^m - q^j) \Bigg) q^t \qbinom{s-1}{t}_q X^{q^{s-1-t}} Y^{q^{s-1}-q^{s-1-t}}
\end{split}
\]
for all $m \ge 0$.
For the purposes of this proof,
we will need the coefficients 
\begin{equation} \label{extension_RM_coeff}
    W_{q^{s-1}-q^{s-1-t}}(\mC \otimes_{\F_q} \F_{q^m}) = \Bigg( \prod_{j=0}^{t-1} (q^m - q^j) \Bigg) q^t \qbinom{s-1}{t}_q, \quad \textnormal{for $0 \le t \le s-1.$}  
\end{equation}
We know from Theorem \ref{expectation_weight_distr} that
\begin{equation} \label{final_eq_RM}
    \sum_{r=k}^{n-d} \frac{1}{\binom{n-1}{r}} \sum_{\ell=0}^{n-r} \binom{n-\ell}{r} \sum_{i=0}^n (-1)^i W_{\ell}(\mC  \otimes_{\F_q}  \F_{q^i}) \gamma(q,i,n) =  n(H_n - H_{d-1}) -\E[\mC].
\end{equation}
By~\eqref{extension_RM_coeff}, the left-hand side of~\eqref{final_eq_RM} is  
\[
\begin{split}
    \sum_{r=k}^{n-d}& \frac{1}{\binom{n-1}{r}} \sum_{\ell=0}^{n-r} \binom{n-\ell}{r} \sum_{i=0}^n (-1)^i W_{\ell}(\mC \otimes_{\F_q} \F_{q^i}) \gamma(q,i,n) \\
    &= \sum_{r=s}^{q^{s-2}} \frac{1}{\binom{q^{s-1}-1}{r}}\sum_{\substack{0\le \ell \le q^{s-1}-r, \\ \ell = q^{s-1} - q^{s-1-t}, \\ 0 \le t \le s-1}} \binom{q^{s-1}-\ell}{r} \sum_{i=0}^{q^{s-1}} (-1)^i  \Bigg( \prod_{j=0}^{t-1} (q^i - q^j) \Bigg) q^t \qbinom{s-1}{t}_q \gamma(q,i,q^{s-1}),
\end{split}
\]
which can be re-written as
\[
\sum_{r=s}^{q^{s-2}} \frac{1}{\binom{q^{s-1}-1}{r}}\sum_{t=0}^{\lfloor s-1-\log_q(r) \rfloor} \binom{q^{s-1-t}}{r} \sum_{i=0}^{q^{s-1}} (-1)^i  \Bigg( \prod_{j=0}^{t-1} (q^i - q^j) \Bigg) q^t \qbinom{s-1}{t}_q \gamma(q,i,q^{s-1})
\]
because for any $ 0 \le t \le s-1$, the condition $0 \le q^{s-1} - q^{s-1-t} \le q^{s-1}-r$ is equivalent 
to $0 \le t \le \lfloor s-1-\log_q(r) \rfloor$. 
Combining this with~\eqref{final_eq_RM} gives the desired result.
\end{proof}

\section{Conclusions and Future Work}
In this paper, we focused on the coverage depth problem in DNA data storage from a coding theory perspective. We derived closed-form expressions for the expectation of various code families by applying techniques ranging from combinatorics to duality theory, and by exploring connections with weight distributions and extension codes. 
Beyond the results presented in this work, several promising research directions remain open. Two of them are the following:

\begin{itemize}
\item[1.] We conjectured that simplex codes solve Problem~\ref{second_problem} for the parameters for which they exist. An interesting challenge is to characterize the codes that offer the best performance in regimes where neither MDS nor simplex codes exist.
\item[2.] Although we presented various closed-form expressions for the expectation, deriving such formulas is generally difficult. Developing general lower bounds or approximation techniques would therefore be valuable. 
\end{itemize}

\end{document}